\newtheorem{theorem}{Theorem}
\newtheorem{corollary}[theorem]{Corollary}
\newtheorem{definition}[theorem]{Definition}
\newtheorem{example}[theorem]{Example}
\newtheorem{proposition}[theorem]{Proposition}
\newtheorem{remark}[theorem]{Remark}
\newenvironment{proof}[1][Proof]{\noindent\textbf{#1.} }{\ \rule{0.5em}{0.5em}}
\begin{document}

\title{From quadratic Hamiltonians of polymomenta to abstract geometrical
Maxwell-like and Einstein-like equations}
\author{Alexandru Oan\u{a} and Mircea Neagu}
\date{}
\maketitle

\begin{abstract}
The aim of this paper is to create a large geometrical background on the
dual $1$-jet space $J^{1\ast }(\mathcal{T},M)$ for a multi-time Hamiltonian
approach of the electromagnetic and gravitational physical fields. Our
geometric-physical construction is achieved starting only from a given
quadratic Hamiltonian function 
\begin{equation*}
H=h_{ab}(t)g^{ij}(t,x)p_{i}^{a}p_{j}^{b}+U_{(a)}^{(i)}(t,x)p_{i}^{a}+%
\mathcal{F}(t,x)
\end{equation*}%
which naturally produces a canonical nonlinear connection $N$, a canonical
Cartan $N$-linear connection $C\Gamma \left( N\right) $ and their
corresponding local distinguished (d-) torsions and curvatures. In such a
context, we construct some geometrical electromagnetic-like and
gravitational-like field theories which are characterized by some natural
geometrical Maxwell-like and Einstein-like equations. Some abstract and
geometrical conservation laws for the multi-time Hamiltonian gravitational
physical field are also given.
\end{abstract}

\textit{2010 Mathematics Subject Classification:} 70S05, 53C07, 53C80.

\textit{Key words and phrases: }dual $1$-jet spaces, nonlinear connections,
canonical Cartan $N$-linear connection, d-torsions and d-curvatures,
geometrical Maxwell-like and Einstein-like equations.

\section{Distinguished Riemannian geometrization of metrical multi-time
Hamilton spaces}

Recently, the studies of Atanasiu and Neagu (see the papers \cite{Atan-Neag2}%
, \cite{Atan-Neag0} and \cite{Atan+Neag1}) initiated the new way of
distinguished Riemannian geometrization for Hamiltonians depending on
polymomenta, which represents in fact a natural "multi-time" extension of
the already classical Hamiltonian geometry on cotangent bundles (synthesized
in the Miron et al.'s book \cite{Miro+Hrim+Shim+Saba}). In what follows, we
expose the main geometrical ideas which characterize the distinguished
Riemannian geometrical approach of Hamiltonians depending on polymomenta
(see for details the Oan\u{a} and Neagu's papers \cite{Oana+Neag}, \cite%
{Oana+Neag-2}).

Let us consider that $h=\left( h_{ab}\left( t\right) \right) $ is a
semi-Riemannian metric on the "multi-time" (\textit{temporal}) manifold $%
\mathcal{T}^{m}$, where $m=\dim \mathcal{T}$. Let $g=\left(
g^{ij}(t^{c},x^{k},p_{k}^{c})\right) $ be a symmetric d-tensor on the dual $%
1 $-jet space $E^{\ast }=J^{1\ast }(\mathcal{T},M)$, which has the rank $%
n=\dim M$ and a constant signature. At the same time, let us consider a
smooth multi-time Hamiltonian function%
\begin{equation*}
E^{\ast }\ni (t^{a},x^{i},p_{i}^{a})\rightarrow H(t^{a},x^{i},p_{i}^{a})\in 
\mathbb{R},
\end{equation*}%
which yields the \textit{fundamental vertical metrical d-tensor}%
\begin{equation*}
G_{(a)(b)}^{(i)(j)}={\dfrac{1}{2}}{\frac{\partial ^{2}H}{\partial
p_{i}^{a}\partial p_{j}^{b}},}
\end{equation*}%
where $a,b=1,..,m$ and $i,j=1,...,n.$

\begin{definition}
A multi-time Hamiltonian function $H:E^{\ast }\rightarrow \mathbb{R},$
having the fundamental vertical metrical d-tensor of the form 
\begin{equation*}
G_{(a)(b)}^{(i)(j)}(t^{c},x^{k},p_{k}^{c})={\frac{1}{2}}{\frac{\partial ^{2}H%
}{\partial p_{i}^{a}\partial p_{j}^{b}}}%
=h_{ab}(t^{c})g^{ij}(t^{c},x^{k},p_{k}^{c}),
\end{equation*}%
is called a \textbf{Kronecker }$h$\textbf{-regular multi-time Hamiltonian
function}.
\end{definition}

In this context, we introduce the following important geometrical concept:

\begin{definition}
A pair $MH_{m}^{n}=(E^{\ast }=J^{1\ast }(\mathcal{T},M),H),$ where $m=\dim 
\mathcal{T}$ and $n=\dim M,$ consisting of the dual $1$-jet space and a
Kronecker $h$-regular multi-time Hamiltonian function $H:E^{\ast
}\rightarrow \mathbb{R},$ is called a \textbf{multi-time Hamilton space}.
\end{definition}

\begin{example}
Let us consider the Kronecker $h$-regular multi-time Hamiltonian function $%
H_{1}:E^{\ast }\rightarrow \mathbb{R},$ given by%
\begin{equation}
H_{1}=\frac{1}{4mc}h_{ab}(t)\varphi ^{ij}(x)p_{i}^{a}p_{j}^{b},  \label{G}
\end{equation}%
where $h_{ab}(t)$ ($\varphi _{ij}(x)$, respectively) is a semi-Riemannian
metric on the temporal (spatial, respectively) manifold $\mathcal{T}$ ($M$,
respectively) having the physical meaning of \textbf{gravitational potentials%
}, and $m$ and $c$ are the known constants from Theoretical Physics
representing the \textbf{mass of the test body} and the \textbf{speed of
light}. Then, the multi-time Hamilton space $\mathcal{G}MH_{m}^{n}=(E^{\ast
},H_{1})$ is called the \textbf{multi-time Hamilton space of the
gravitational field}.
\end{example}

\begin{example}
If we consider on $E^{\ast }$ a symmetric d-tensor field $g^{ij}(t,x)$,
having the rank $n$ and a constant signature, we can define the Kronecker $h$%
-regular multi-time Hamiltonian function $H_{2}:E^{\ast }\rightarrow \mathbb{%
R},$ by setting%
\begin{equation}
H_{2}=h_{ab}(t)g^{ij}(t,x)p_{i}^{a}p_{j}^{b}+U_{(a)}^{(i)}(t,x)p_{i}^{a}+%
\mathcal{F}(t,x),  \label{NED}
\end{equation}%
where $U_{(a)}^{(i)}(t,x)$ is a d-tensor field on $E^{\ast },$ and $\mathcal{%
F}(t,x)$ is a function on $E^{\ast }$. Then, the multi-time Hamilton space $%
\mathcal{NED}MH_{m}^{n}=(E^{\ast },H_{2})$ is called the \textbf{%
non-a\-u\-to\-no\-mous multi-time Hamilton space of electrodynamics}. The
dynamical character of the gravitational potentials $g_{ij}(t,x)$ (i.e., the
dependence on the temporal coordinates $t^{c}$) motivated us to use the word 
\textbf{"non-autonomous".}
\end{example}

An important role for the subsequent development of our distinguished
Riemannian geometrical theory for multi-time Hamilton spaces is
re\-pre\-sen\-ted by the following result (proved in the paper \cite%
{Atan-Neag2}):

\begin{theorem}
\label{thchar} If we have $m=\dim \mathcal{T}\geq 2,$ then the following
statements are equivalent:

\emph{(i)} $H$ is a Kronecker $h$-regular multi-time Hamiltonian function on 
$E^{\ast }$.

\emph{(ii)} The multi-time Hamiltonian function $H$ reduces to a multi-time
Hamiltonian function of non-autonomous electrodynamic type. In other words,
we have%
\begin{equation}
H=h_{ab}(t)g^{ij}(t,x)p_{i}^{a}p_{j}^{b}+U_{(a)}^{(i)}(t,x)p_{i}^{a}+%
\mathcal{F}(t,x).  \label{NEDTH}
\end{equation}
\end{theorem}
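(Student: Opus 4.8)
The plan is to prove the two implications separately, the content lying almost entirely in (i) $\Rightarrow$ (ii). For (ii) $\Rightarrow$ (i), I would simply differentiate the expression (\ref{NEDTH}) twice with respect to the polymomenta. Using the symmetries $h_{ab}=h_{ba}$ and $g^{ij}=g^{ji}$, a direct computation gives $\partial H/\partial p_k^c = 2h_{cb}g^{kj}p_j^b + U_{(c)}^{(k)}$ and then ${\frac{1}{2}}\,\partial^2 H/(\partial p_i^a \partial p_j^b) = h_{ab}g^{ij}$, exhibiting $H$ as Kronecker $h$-regular with fundamental vertical metrical d-tensor $h_{ab}(t)g^{ij}(t,x)$. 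No hypothesis on $m$ is needed here.

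For (i) $\Rightarrow$ (ii), the essential point is to show that Kronecker $h$-regularity, i.e. ${\frac{1}{2}}\,\partial^2 H/(\partial p_i^a \partial p_j^b) = h_{ab}(t)g^{ij}(t,x,p)$, actually forces $g^{ij}$ to be independent of the polymomenta. My tool is the total symmetry of the third partial derivatives of the smooth function $H$ under every permutation of the index pairs $(i,a)$, $(j,b)$, $(k,c)$. Differentiating the regularity condition once more, and using that $h_{ab}$ does not depend on $p$, I obtain
\begin{equation*}
h_{ab}\,\frac{\partial g^{ij}}{\partial p_k^c} = h_{cb}\,\frac{\partial g^{kj}}{\partial p_i^a},
\end{equation*}
which is the symmetry under the swap $(i,a)\leftrightarrow(k,c)$. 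Contracting this identity with the inverse temporal metric $h^{ab}$ (so that $h^{ab}h_{bc}=\delta^a_c$ and $h^{ab}h_{ab}=m$) collapses it to
\begin{equation*}
m\,\frac{\partial g^{ij}}{\partial p_k^c} = \frac{\partial g^{kj}}{\partial p_i^c}.
\end{equation*}
Interchanging the upper indices $i$ and $k$ in this relation and substituting back yields $(m^2-1)\,\partial g^{ij}/\partial p_k^c = 0$. Since $m=\dim\mathcal{T}\geq 2$ gives $m^2-1\neq 0$, I conclude that $\partial g^{ij}/\partial p_k^c=0$, that is, $g^{ij}=g^{ij}(t,x)$.

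Once the metrical d-tensor is known to be $p$-independent, the Hessian $\partial^2 H/(\partial p_i^a\partial p_j^b)=2h_{ab}(t)g^{ij}(t,x)$ no longer involves $p$, so $H$ is a quadratic polynomial in the polymomenta. Integrating twice with respect to $p$ then produces exactly (\ref{NEDTH}): the quadratic term $h_{ab}g^{ij}p_i^a p_j^b$ reproduces the Hessian $2h_{ab}g^{ij}$, while the two successive integration ``constants'' are precisely a d-tensor $U_{(a)}^{(i)}(t,x)$ and a function $\mathcal{F}(t,x)$. The main obstacle — and the only place the dimension hypothesis enters — is the symmetry-plus-contraction argument above; the factor $m^2-1$ makes transparent why the statement fails for $m=1$, namely the classical cotangent-bundle Hamiltonian geometry, where $g^{ij}$ may genuinely depend on the momenta.
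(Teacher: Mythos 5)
Your proof is correct. Note that this paper does not actually reproduce a proof of Theorem \ref{thchar} --- it defers to the cited reference \cite{Atan-Neag2} --- but your argument is the standard one used there and throughout this literature: both directions check out, the identity $h_{ab}\,\partial g^{ij}/\partial p_k^c = h_{cb}\,\partial g^{kj}/\partial p_i^a$ follows correctly from the symmetry of the third partials of $H$, the contraction with $h^{ab}$ and the index swap legitimately produce the factor $m^2-1$, and the final integration step (the Hessian being $p$-independent forces $H$ to be a quadratic polynomial in the polymomenta, with d-tensor coefficients because the homogeneous pieces of a scalar under the linear fiber transformations of $p_i^a$ are separately invariant) is sound. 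Your closing remark correctly identifies why the hypothesis $m\geq 2$ is indispensable and why the case $m=1$ reduces to the classical Hamilton geometry on the cotangent bundle where $g^{ij}$ may depend on the momenta.
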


\begin{corollary}
The \textit{fundamental vertical metrical d-tensor of a }Kronecker $h$%
-regular multi-time Hamiltonian function $H$ has the form%
\begin{equation}
G_{(a)(b)}^{(i)(j)}={\frac{1}{2}}{\frac{\partial ^{2}H}{\partial
p_{i}^{a}\partial p_{j}^{b}}}=\left\{ 
\begin{array}{ll}
h_{11}(t)g^{ij}(t,x^{k},p_{k}^{1}), & m=\dim \mathcal{T}=1\medskip \\ 
h_{ab}(t^{c})g^{ij}(t^{c},x^{k}), & m=\dim \mathcal{T}\geq 2.%
\end{array}%
\right.  \label{FVDT}
\end{equation}
\end{corollary}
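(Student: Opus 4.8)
The plan is to prove the corollary as a direct consequence of Theorem~\ref{thchar}, splitting the argument according to the dimension $m=\dim\mathcal{T}$. The key observation is that the fundamental vertical metrical d-tensor $G_{(a)(b)}^{(i)(j)}=\frac{1}{2}\,\partial^{2}H/\partial p_{i}^{a}\partial p_{j}^{b}$ is, by definition of Kronecker $h$-regularity, of the form $h_{ab}(t)g^{ij}$, so everything reduces to determining on which variables the d-tensor $g^{ij}$ is allowed to depend. This dependence is precisely what changes between the two regimes, and that is what the two branches of formula~(\ref{FVDT}) record.

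For the case $m\geq 2$, the argument is immediate: Theorem~\ref{thchar} asserts that a Kronecker $h$-regular $H$ necessarily has the non-autonomous electrodynamic form~(\ref{NEDTH}), in which the coefficient of the quadratic term $p_{i}^{a}p_{j}^{b}$ is $h_{ab}(t)g^{ij}(t,x)$ with $g^{ij}$ depending only on $(t^{c},x^{k})$. Differentiating~(\ref{NEDTH}) twice with respect to the polymomenta, I would note that the linear term $U_{(a)}^{(i)}p_{i}^{a}$ and the term $\mathcal{F}(t,x)$ contribute nothing to the second derivative, while the quadratic term yields $h_{ab}(t^{c})g^{ij}(t^{c},x^{k})$ after the factor $1/2$ cancels the symmetrization. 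This gives exactly the second line of~(\ref{FVDT}), and the crucial point is that the polymomenta-independence of $g^{ij}$ is inherited directly from the structural form guaranteed by the theorem.

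For the case $m=1$, Theorem~\ref{thchar} does not apply (its hypothesis is $m\geq 2$), so here I cannot force the electrodynamic reduction and must allow the more general dependence. With a single temporal index, $a=b=1$, the Kronecker $h$-regularity condition reads $G_{(1)(1)}^{(i)(j)}=h_{11}(t)g^{ij}$, and there is no mechanism constraining $g^{ij}$ to be independent of the polymomenta; hence $g^{ij}$ may retain its full dependence $g^{ij}(t,x^{k},p_{k}^{1})$. This reproduces the first line of~(\ref{FVDT}). The formula thus simply transcribes the definition of Kronecker $h$-regularity in the unconstrained one-dimensional setting.

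The main (and only genuine) obstacle is conceptual rather than computational: one must recognize that the dichotomy in~(\ref{FVDT}) is not an accident of notation but reflects the exact threshold $m\geq 2$ in Theorem~\ref{thchar}, below which the rigidity phenomenon forcing momentum-independence of $g^{ij}$ fails. Everything else is a two-line differentiation that I would not grind through in detail; the substance of the corollary lies entirely in correctly invoking the characterization theorem in the high-dimensional branch and in observing its inapplicability in the low-dimensional branch.
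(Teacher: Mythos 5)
Your proof is correct and follows the route the paper implicitly intends: the corollary is stated without proof as an immediate consequence of the definition of Kronecker $h$-regularity (for $m=1$) and of Theorem~\ref{thchar} (for $m\geq 2$), which is exactly how you argue it. The second-derivative computation and the observation that the $m=1$ case simply transcribes the definition are both accurate.
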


Following now the Miron's geometrical ideas from \cite{Miro+Hrim+Shim+Saba},
the paper \cite{Atan-Neag2} proves that any Kronecker $h$-regular multi-time
Hamiltonian function $H$ produces a natural nonlinear connection on the dual
1-jet space $E^{\ast }$, which depends only by the given Hamiltonian
function $H$:

\begin{theorem}
The pair of local functions $N=\left( \underset{1}{N}\text{{}}_{(i)b}^{(a)},%
\underset{2}{N}\text{{}}_{(i)j}^{(a)}\right) $ on $E^{\ast },$ where ($\chi
_{bc}^{a}$ are the Christoffel symbols of the semi-Riemannian temporal
metric $h_{ab}$)%
\begin{equation*}
\begin{array}{l}
\underset{1}{N}\text{{}}_{(i)b}^{(a)}=\chi _{bc}^{a}p_{i}^{c},\medskip \\ 
\underset{2}{N}\text{{}}_{(i)j}^{(a)}=\dfrac{h^{ab}}{4}\left[ \dfrac{%
\partial g_{ij}}{\partial x^{k}}\dfrac{\partial H}{\partial p_{k}^{b}}-%
\dfrac{\partial g_{ij}}{\partial p_{k}^{b}}\dfrac{\partial H}{\partial x^{k}}%
+g_{ik}\dfrac{\partial ^{2}H}{\partial x^{j}\partial p_{k}^{b}}+g_{jk}\dfrac{%
\partial ^{2}H}{\partial x^{i}\partial p_{k}^{b}}\right] ,%
\end{array}%
\end{equation*}%
represents a nonlinear connection on $E^{\ast }$. This is called the \textbf{%
canonical nonlinear connection of the multi-time Hamilton space }$%
MH_{m}^{n}=(E^{\ast },H).$
\end{theorem}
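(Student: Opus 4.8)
The plan is to verify that the proposed coefficients obey the transformation laws characterizing a nonlinear connection on $E^{\ast }$, namely those under which the local vector fields
\begin{equation*}
\frac{\delta }{\delta t^{b}}=\frac{\partial }{\partial t^{b}}-\underset{1}{N}{}_{(i)b}^{(a)}\frac{\partial }{\partial p_{i}^{a}},\qquad \frac{\delta }{\delta x^{j}}=\frac{\partial }{\partial x^{j}}-\underset{2}{N}{}_{(i)j}^{(a)}\frac{\partial }{\partial p_{i}^{a}}
\end{equation*}
together with $\partial /\partial p_{i}^{a}$ form a frame transforming tensorially, so that the induced horizontal distribution is globally well defined. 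First I would record the coordinate changes on $E^{\ast }$ produced by $\tilde{t}^{a}=\tilde{t}^{a}(t^{b})$ and $\tilde{x}^{i}=\tilde{x}^{i}(x^{j})$, in particular the momentum rule $\tilde{p}_{i}^{a}=\dfrac{\partial x^{j}}{\partial \tilde{x}^{i}}\dfrac{\partial \tilde{t}^{a}}{\partial t^{b}}p_{j}^{b}$, and from them extract the precise inhomogeneous (non-tensorial) terms that a genuine nonlinear connection must reproduce. The proof then reduces to checking that each of the two blocks generates exactly this inhomogeneous term.

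For the temporal block $\underset{1}{N}{}_{(i)b}^{(a)}=\chi _{bc}^{a}p_{i}^{c}$ the verification is short: the Christoffel symbols $\chi _{bc}^{a}$ of $h_{ab}$ carry their standard second-order inhomogeneous term, $p_{i}^{c}$ transforms by the rule above, and the product of the two reproduces precisely the law required of the temporal coefficients. No property of $H$ beyond the metric $h$ enters here.

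The spatial block $\underset{2}{N}{}_{(i)j}^{(a)}$ is where the real work lies. I would insert the transformation rules into each of the four summands
\begin{equation*}
\frac{\partial g_{ij}}{\partial x^{k}}\frac{\partial H}{\partial p_{k}^{b}},\quad -\frac{\partial g_{ij}}{\partial p_{k}^{b}}\frac{\partial H}{\partial x^{k}},\quad g_{ik}\frac{\partial ^{2}H}{\partial x^{j}\partial p_{k}^{b}},\quad g_{jk}\frac{\partial ^{2}H}{\partial x^{i}\partial p_{k}^{b}},
\end{equation*}
tracking the second-order Jacobians $\partial ^{2}\tilde{x}/\partial x\partial x$ that appear whenever a spatial derivative meets the non-tensorial part of $g_{ij}$, of $p_{k}^{c}$, or of the operator $\partial /\partial x^{k}$ itself. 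Here the Kronecker $h$-regularity is indispensable: through \eqref{FVDT} it fixes $\partial ^{2}H/\partial p_{k}^{b}\partial p_{l}^{c}=2h_{bc}g^{kl}$, so that $\partial H/\partial p_{k}^{b}$ behaves like the contraction $2h_{bc}g^{kl}p_{l}^{c}$ up to a momentum-free shift, and for $m\geq 2$ one may even substitute the explicit electrodynamics form \eqref{NEDTH} granted by Theorem \ref{thchar}. The outer factor $h^{ab}/4$ and the symmetric pairing of the last two terms are exactly calibrated so that the genuinely tensorial pieces reassemble into $\underset{2}{\tilde{N}}$, while the accumulated anomalous contributions collapse to the single inhomogeneous term demanded by the nonlinear-connection rule.

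I expect the main obstacle to be precisely this cancellation: individually not one of the four summands transforms as a part of a nonlinear connection, and one must show that their inhomogeneous remainders, coming from $\partial _{x}g_{ij}$, from $\partial _{x}p_{k}^{c}$, and from commuting $\partial /\partial \tilde{x}^{j}$ past the Jacobian factors, sum to the correct total. The cleanest bookkeeping is to split every term into a principal (tensorial) part plus a connection part, discard the principal parts, which manifestly recombine, and verify that the connection parts add up to the prescribed inhomogeneous term; the regularity form \eqref{FVDT} of the fundamental vertical metrical d-tensor is the identity that makes these remainders match.
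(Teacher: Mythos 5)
First, note that the paper itself does not prove this theorem: it is quoted from the reference [Atanasiu--Neagu, \emph{Canonical nonlinear connections in the multi-time Hamilton geometry}], so your attempt can only be measured against the standard argument there and against the structure the paper sets up in the Corollary that immediately follows. Your overall strategy --- verify that $\underset{1}{N}{}_{(i)b}^{(a)}$ and $\underset{2}{N}{}_{(i)j}^{(a)}$ obey the inhomogeneous transformation laws forced by requiring $\delta p_{i}^{a}$ (equivalently $\delta /\delta t^{a}$, $\delta /\delta x^{i}$) to transform tensorially under $\tilde{t}^{a}=\tilde{t}^{a}(t^{b})$, $\tilde{x}^{i}=\tilde{x}^{i}(x^{j})$, $\tilde{p}_{i}^{a}=\frac{\partial x^{j}}{\partial \tilde{x}^{i}}\frac{\partial \tilde{t}^{a}}{\partial t^{b}}p_{j}^{b}$ --- is the correct and standard one, and your treatment of the temporal block is fine. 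The problem is that for the spatial block the proposal stops exactly where the theorem begins: the statement that ``the accumulated anomalous contributions collapse to the single inhomogeneous term demanded by the nonlinear-connection rule'' is the entire mathematical content of the result, and you assert it rather than exhibit the cancellation. A referee would ask to see the second-order Jacobian terms coming from $\partial g_{ij}/\partial x^{k}$, from $\partial H/\partial x^{k}$ at fixed $p$ versus fixed $\tilde{p}$, and from the two mixed derivatives, written out and summed. In addition, one ingredient you lean on is wrong in half the cases: the claim that $\partial H/\partial p_{k}^{b}$ equals $2h_{bc}g^{kl}p_{l}^{c}$ up to a momentum-free shift follows from \eqref{FVDT} only when $m\geq 2$, where $g^{ij}$ is independent of the polymomenta; for $m=1$ the d-tensor $g^{ij}(t,x^{k},p_{k}^{1})$ may depend on $p$, that integration fails, and the verification must be run with $H$ generic (this is precisely Miron's classical computation on $T^{*}M$).

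There is also a much shorter route for $m\geq 2$, which is the one the paper's own layout suggests: by Theorem \ref{thchar} a Kronecker $h$-regular $H$ has the electrodynamics form \eqref{NEDTH}, and substituting it into the displayed formula gives, by a direct computation, $\underset{2}{N}{}_{(i)j}^{(a)}=-\Gamma _{ij}^{k}p_{k}^{a}+T_{(i)j}^{(a)}$ with $T_{(i)j}^{(a)}$ the d-tensor \eqref{aux_N2}. Since the pair $\left( \chi _{bc}^{a}p_{i}^{c},\ -\Gamma _{ij}^{k}p_{k}^{a}\right) $ built from the Christoffel symbols of $h$ and $g$ is already known to be a nonlinear connection on $J^{1\ast }(\mathcal{T},M)$, and since adding a d-tensor with the index structure $T_{(i)j}^{(a)}$ to the spatial coefficients of a nonlinear connection again yields a nonlinear connection, the conclusion follows with almost no Jacobian bookkeeping; only the case $m=1$ then requires the full direct verification you describe. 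I would either carry your computation through explicitly (separating the two cases of \eqref{FVDT}) or switch to this decomposition argument for $m\geq 2$.
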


Taking into account the Theorem \ref{thchar} and using the \textit{%
generalized spatial Christoffel symbols} of the d-tensor $g_{ij}$, which are
given by%
\begin{equation*}
\Gamma _{ij}^{k}=\frac{g^{kl}}{2}\left( \frac{\partial g_{li}}{\partial x^{j}%
}+\frac{\partial g_{lj}}{\partial x^{i}}-\frac{\partial g_{ij}}{\partial
x^{l}}\right) ,
\end{equation*}%
we immediately obtain the following geometrical result:

\begin{corollary}
For $m=\dim \mathcal{T}\geq 2,$ the canonical nonlinear connection $N$ of a
multi-time Hamilton space $MH_{m}^{n}=(E^{\ast },H),$ whose Hamiltonian
function is given by (\ref{NEDTH}), has the components%
\begin{equation*}
\underset{1}{N}\text{{}}_{(i)b}^{(a)}=\chi _{bc}^{a}p_{i}^{c},\qquad\underset%
{2}{N}\text{{}}_{(i)j}^{(a)}=-\Gamma _{ij}^{k}p_{k}^{a}+T_{(i)j}^{(a)},
\end{equation*}%
where 
\begin{equation}
T_{(i)j}^{(a)}=\dfrac{h^{ab}}{4}\left( U_{ib\bullet j}+U_{jb\mathbf{\bullet }%
i}\right) ,  \label{aux_N2}
\end{equation}%
and%
\begin{equation*}
U_{ib}=g_{ik}U_{(b)}^{(k)},\qquad U_{kb\mathbf{\bullet }r}=\dfrac{\partial
U_{kb}}{\partial x^{r}}-U_{sb}\Gamma _{kr}^{s}.
\end{equation*}
\end{corollary}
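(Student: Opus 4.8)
The plan is to specialize the general formula for the canonical nonlinear connection from the preceding Theorem to the non-autonomous electrodynamic Hamiltonian (\ref{NEDTH}). The first component $\underset{1}{N}{}_{(i)b}^{(a)}=\chi_{bc}^{a}p_{i}^{c}$ requires nothing, since it is universal and depends only on the temporal Christoffel symbols; so the entire work concerns $\underset{2}{N}{}_{(i)j}^{(a)}$. The decisive first observation, supplied by the Corollary on the fundamental vertical metrical d-tensor (\ref{FVDT}), is that for $m\geq 2$ the metric $g^{ij}$ (hence $g_{ij}$) depends only on $(t^{c},x^{k})$ and not on the polymomenta. Consequently $\partial g_{ij}/\partial p_{k}^{b}=0$, which annihilates the second term $-\left(\partial g_{ij}/\partial p_{k}^{b}\right)\left(\partial H/\partial x^{k}\right)$ in the bracket defining $\underset{2}{N}{}_{(i)j}^{(a)}$; this is what makes the whole reduction tractable, and it is the only place where the hypothesis $m\geq 2$ is used.

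Next I would compute the two derivatives of $H$ that survive. Differentiating (\ref{NEDTH}), the symmetry of $h_{ab}$ and $g^{ij}$ yields $\partial H/\partial p_{k}^{b}=2h_{bd}g^{ks}p_{s}^{d}+U_{(b)}^{(k)}$, and, since $h_{bd}$ is $x$-independent, $\partial^{2}H/\partial x^{j}\partial p_{k}^{b}=2h_{bd}\left(\partial g^{ks}/\partial x^{j}\right)p_{s}^{d}+\partial U_{(b)}^{(k)}/\partial x^{j}$ (and symmetrically with $i$ in place of $j$). Substituting these into the bracket and collecting, I would split the result into a part linear in the polymomenta $p_{s}^{d}$ and a part free of polymomenta, treating the two separately.

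For the momentum-linear part I would use $h^{ab}h_{bd}=\delta_{d}^{a}$, so that the prefactor $\tfrac{h^{ab}}{4}\cdot 2h_{bd}$ collapses to $\tfrac{1}{2}\delta_{d}^{a}$, and then apply the standard identity $\partial g^{ks}/\partial x^{j}=-g^{kl}g^{sm}\left(\partial g_{lm}/\partial x^{j}\right)$ obtained by differentiating $g^{ks}g_{sl}=\delta_{l}^{k}$. After this substitution the coefficient of $\tfrac{1}{2}p_{s}^{a}$ becomes $g^{sm}\left(\partial g_{ij}/\partial x^{m}-\partial g_{im}/\partial x^{j}-\partial g_{jm}/\partial x^{i}\right)=-2\Gamma_{ij}^{s}$ by the definition of the generalized spatial Christoffel symbols, so this part equals $-\Gamma_{ij}^{k}p_{k}^{a}$.

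The momentum-free part is $\tfrac{h^{ab}}{4}\left[\left(\partial g_{ij}/\partial x^{k}\right)U_{(b)}^{(k)}+g_{ik}\,\partial U_{(b)}^{(k)}/\partial x^{j}+g_{jk}\,\partial U_{(b)}^{(k)}/\partial x^{i}\right]$. To recognize it as $T_{(i)j}^{(a)}$ I would expand $U_{ib\bullet j}+U_{jb\bullet i}$ using $U_{ib}=g_{ik}U_{(b)}^{(k)}$, the Leibniz rule, and the contraction $g_{sk}\Gamma_{ij}^{s}=\tfrac{1}{2}\left(\partial g_{ki}/\partial x^{j}+\partial g_{kj}/\partial x^{i}-\partial g_{ij}/\partial x^{k}\right)$. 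The symmetry $g_{ik}=g_{ki}$ produces two exact cancellations between the $\left(\partial g/\partial x\right)U$ terms from the Leibniz rule and those from the Christoffel contraction, leaving precisely the bracket above. This matching step, together with the bookkeeping of the factor $2$ coming from symmetrizing the quadratic term in $H$, is the main obstacle: the identity holds only because of the symmetry of $g$ and $\Gamma_{ij}^{k}$, and a single stray sign or factor propagates through the whole computation. Collecting the two parts gives $\underset{2}{N}{}_{(i)j}^{(a)}=-\Gamma_{ij}^{k}p_{k}^{a}+T_{(i)j}^{(a)}$, as claimed.
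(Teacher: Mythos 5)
Your proposal is correct and is exactly the computation the paper leaves implicit (the paper merely asserts the corollary follows "immediately" from Theorem \ref{thchar} and the general formula for $N$): the vanishing of $\partial g_{ij}/\partial p_{k}^{b}$ for $m\geq 2$, the collapse of the momentum-linear part to $-\Gamma_{ij}^{k}p_{k}^{a}$ via $\partial g^{ks}/\partial x^{j}=-g^{kl}g^{sm}\partial g_{lm}/\partial x^{j}$, and the identification of the momentum-free remainder with $\frac{h^{ab}}{4}\left(U_{ib\bullet j}+U_{jb\bullet i}\right)$ all check out. No gaps.
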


The canonical nonlinear connection $N=\left( \underset{1}{N}\text{{}}%
_{(i)b}^{(a)},\text{ }\underset{2}{N}\text{{}}_{(i)j}^{(a)}\right) $ on $%
E^{\ast }\,$allows us the construction of the \textit{adapted bases}%
\begin{equation*}
\left\{ \dfrac{\delta }{\delta t^{a}},\dfrac{\delta }{\delta x^{i}},\dfrac{%
\partial }{\partial p_{i}^{a}}\right\} \subset \chi \left( E^{\ast }\right)
,\qquad \left\{ dt^{a},dx^{i},\delta p_{i}^{a}\right\} \subset \chi ^{\ast
}\left( E^{\ast }\right) ,
\end{equation*}%
where 
\begin{equation}
\begin{array}{c}
\dfrac{\delta }{\delta t^{a}}=\dfrac{\partial }{\partial t^{a}}-\underset{1}{%
N}\underset{}{\overset{\left( b\right) }{_{\left( j\right) a}}}\dfrac{%
\partial }{\partial p_{j}^{b}},\medskip \qquad \dfrac{\delta }{\delta x^{i}}=%
\dfrac{\partial }{\partial x^{i}}-\underset{2}{N}\underset{}{\overset{\left(
b\right) }{_{\left( j\right) i}}}\dfrac{\partial }{\partial p_{j}^{b}}, \\ 
\delta p_{i}^{a}=dp_{i}^{a}+\underset{1}{N}\overset{(a)}{\underset{}{%
_{\left( i\right) b}}}dt^{b}+\underset{2}{N}\overset{(a)}{\underset{}{%
_{\left( i\right) j}}}dx^{j}.%
\end{array}
\label{bazele_expl}
\end{equation}

The main result of the metrical multi-time Hamilton geometry is the Theorem
of existence of the Cartan canonical $h$-normal $N$-linear connection $%
C\Gamma \left( N\right) $ (see \cite{Oana+Neag-2}) which allows the
subsequent development of our metrical multi-time Hamilton theory of
physical fields.

\begin{theorem}[the Cartan canonical $N$-linear connection]
On the metrical multi-time Hamilton space $MH_{m}^{n}$ = $(J^{1\ast }(%
\mathcal{T},M),H),$ endowed with the canonical nonlinear connection $N,$
there exists an unique $h$-normal $N$-linear connection%
\begin{equation*}
C\Gamma (N)=\left( \chi _{bc}^{a},\text{ }A_{jc}^{i},\text{ }H_{jk}^{i},%
\text{ }C_{j\left( c\right) }^{i\left( k\right) }\right) ,
\end{equation*}%
having the metrical properties:\medskip

\emph{(i)} $g_{ij|k}=0,\quad g^{ij}|_{(c)}^{(k)}=0$,\medskip

\emph{(ii)} ${A_{jc}^{i}={\dfrac{g^{il}}{2}}{\dfrac{\delta g_{lj}}{\delta
t^{c}}},\quad H_{jk}^{i}=H_{kj}^{i},\quad C_{j(c)}^{i(k)}=C_{j(c)}^{k(i)},}$%
\medskip \newline
where \textquotedblright $_{/a}$\textquotedblright $,$ \textquotedblright $%
_{|k}$\textquotedblright\ and \textquotedblright $|_{(c)}^{(k)}$%
\textquotedblright\ represent the local covariant derivatives of $C\Gamma
(N) $. Moreover, the adapted local coefficients ${H_{jk}^{i}}$ and ${%
C_{j(c)}^{i(k)}}$ of the \textbf{Cartan canonical connection} $C\Gamma (N)$
have the expressions%
\begin{equation*}
H{_{jk}^{i}=}\dfrac{g^{ir}}{2}\left( {\dfrac{\delta g_{jr}}{\delta x^{k}}}+{%
\dfrac{\delta g_{kr}}{\delta x^{j}}}-{\dfrac{\delta g_{jk}}{\delta x^{r}}}%
\right) {,\quad }C_{i(c)}^{j(k)}{=-{\dfrac{g_{ir}}{2}}\left( {\dfrac{%
\partial g^{jr}}{\partial p_{k}^{c}}}+{\dfrac{\partial g^{kr}}{\partial
p_{j}^{c}}}-{\dfrac{\partial g^{jk}}{\partial p_{r}^{c}}}\right) }.
\end{equation*}
\end{theorem}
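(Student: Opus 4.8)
\noindent The plan is to adapt the classical Christoffel technique to the triply-graded covariant calculus carried by an $h$-normal $N$-linear connection on $E^{\ast }$. Such a connection is completely encoded by its adapted coefficients $\left( \chi _{bc}^{a},A_{jc}^{i},H_{jk}^{i},C_{j(c)}^{i(k)}\right) $: the $h$-normality fixes the purely temporal coefficient to be the Christoffel symbol $\chi _{bc}^{a}$ of $h$, and three local covariant derivatives --- the temporal $_{/c}$, the horizontal $_{|k}$ and the vertical $|_{(c)}^{(k)}$ --- are built from the adapted basis (\ref{bazele_expl}) together with the respective coefficient. I would prove existence and uniqueness at once, by showing that the metrical and symmetry requirements force each coefficient into precisely the stated shape, and then noting that these very expressions fulfil all the conditions.

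First I would dispose of the temporal direction. Expanding $g_{ij/c}=\delta g_{ij}/\delta t^{c}-g_{rj}A_{ic}^{r}-g_{ir}A_{jc}^{r}$ and inserting the prescribed value $A_{jc}^{i}=(g^{il}/2)\,\delta g_{lj}/\delta t^{c}$, a single contraction using $g^{il}g_{lj}=\delta _{j}^{i}$ yields $g_{ij/c}=0$; thus the temporal part is consistent and is the one enforcing temporal $h$-metricity for the $h$-normal structure. Next I would treat the horizontal coefficient. Setting $g_{ij|k}=0$ gives
\begin{equation*}
\frac{\delta g_{ij}}{\delta x^{k}}=g_{rj}H_{ik}^{r}+g_{ir}H_{jk}^{r}.
\end{equation*}
Writing the two further cyclic permutations of $(i,j,k)$, combining them with signs $(+,+,-)$ and using $H_{jk}^{i}=H_{kj}^{i}$ to cancel the redundant terms, I would solve for $H_{jk}^{i}$ and recover exactly the displayed $H$-formula. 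The vertical coefficient follows from the identical algebra performed on the fibre indices: starting from $g^{ij}|_{(c)}^{(k)}=0$, permuting the contravariant indices $(i,j,k)$ attached to $\partial /\partial p_{\bullet }^{c}$ and invoking $C_{j(c)}^{i(k)}=C_{j(c)}^{k(i)}$, I would obtain the $C$-formula, the global minus sign appearing because the Christoffel combination is now taken on the contravariant components $g^{ij}$.

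The main obstacle is not this Christoffel bookkeeping, which is routine, but verifying that the locally defined $H_{jk}^{i}$ and $C_{j(c)}^{i(k)}$ genuinely transform as the coefficients of an $N$-linear connection under coordinate changes on $J^{1\ast }(\mathcal{T},M)$. This rests on the transformation laws of the canonical nonlinear connection $N$ and of the adapted frame (\ref{bazele_expl}): one must check that the inhomogeneous terms produced by the $\delta /\delta x^{k}$- and $\partial /\partial p_{k}^{c}$-derivatives of $g$ combine with those coming from the $g^{-1}$-factors so that $H$ and $C$ acquire the correct connection (not merely tensorial) transformation rule. Granting this, existence follows by defining the connection through the displayed formulas and verifying (i)--(ii) by back-substitution --- the symmetries $H_{jk}^{i}=H_{kj}^{i}$ and $C_{j(c)}^{i(k)}=C_{j(c)}^{k(i)}$ being immediate from the symmetry of $g$ --- while uniqueness is guaranteed since the Christoffel inversion leaves no free parameters.
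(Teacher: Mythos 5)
Your Christoffel-process argument is correct and is essentially the standard route: the paper itself gives no in-text proof of this theorem (it defers to \cite{Oana+Neag-2}), but that reference, following the Miron et al.\ template, establishes existence and uniqueness in exactly your way --- $\chi^a_{bc}$ and $A^i_{jc}$ fixed by $h$-normality, then $H^i_{jk}$ and $C^{i(k)}_{j(c)}$ forced by solving $g_{ij|k}=0$ and $g^{ij}|^{(k)}_{(c)}=0$ under the symmetries $H^i_{jk}=H^i_{kj}$ and $C^{i(k)}_{j(c)}=C^{k(i)}_{j(c)}$. Your sign bookkeeping for the contravariant vertical case is right, and your flagging of the one nontrivial point --- that the resulting $H$ and $C$ must be checked to obey the connection (not merely tensorial) transformation law in the adapted frame --- correctly identifies where the remaining work lies.
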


\begin{remark}
\emph{(i)} The Cartan canonical connection $C\Gamma (N)$ of the multi-time
Hamilton space $MH_{m}^{n}$ verifies also the metrical properties%
\begin{equation*}
h_{ab/c}=h_{ab|k}=h_{ab}|_{(c)}^{(k)}=0,\quad g_{ij/c}=0.
\end{equation*}

\emph{(ii)} In the case $m=\dim \mathcal{T}\geq 2$, the adapted coefficients
of the Cartan canonical connection $C\Gamma (N)$ of the multi-time Hamilton
space $MH_{m}^{n}$ reduce to%
\begin{equation}
A_{bc}^{a}=\chi _{bc}^{a},\quad A_{jc}^{i}={{\dfrac{g^{il}}{2}}{\dfrac{%
\partial g_{lj}}{\partial t^{c}}}},\quad H_{jk}^{i}=\Gamma _{jk}^{i},\quad {%
C_{j(c)}^{i(k)}}=0.  \label{Cartan-local-coeff}
\end{equation}
\end{remark}

Applying the formulas that determine the local d-torsions and d-curvatures
of an $h$-normal $N$-linear connection $D\Gamma (N)$ (see \cite{Oana+Neag})
to the Cartan canonical connection $C\Gamma (N)$, we obtain (see \cite%
{Oana+Neag-2}):

\begin{theorem}
The torsion tensor $\mathbb{T}$ of the Cartan canonical connection $C\Gamma
(N)$ of the multi-time Hamilton space $MH_{m}^{n}$ is determined by the
adapted local d-components%
\begin{equation*}
\begin{tabular}{|l|l|l|l|l|l|}
\hline
& $h_{\mathcal{T}}$ & \multicolumn{2}{|l|}{$h_{M}$} & \multicolumn{2}{|l|}{$%
v $} \\ \hline
& $m\geq 1$ & $m=1$ & $m\geq 2$ & $m=1$ & $m\geq 2$ \\ \hline
$h_{\mathcal{T}}h_{\mathcal{T}}$ & $0$ & $0$ & $0$ & $0$ & $R_{(r)ab}^{(f)}$
\\ \hline
$h_{M}h_{\mathcal{T}}$ & $0$ & $T_{1j}^{r}$ & $T_{aj}^{r}$ & $%
R_{(r)1j}^{(1)} $ & $R_{(r)aj}^{(f)}$ \\ \hline
$vh_{\mathcal{T}}$ & $0$ & $0$ & $0$ & $P_{(r)1(1)}^{(1)\;\;(j)}$ & $%
P_{(r)a(b)}^{(f)\;\;(j)}$ \\ \hline
$h_{M}h_{M}$ & $0$ & $0$ & $0$ & $R_{(r)ij}^{(1)}$ & $R_{(r)ij}^{(f)}$ \\ 
\hline
$vh_{M}$ & $0$ & $P_{i(1)}^{r(j)}$ & $0$ & $P_{(r)i(1)}^{(1)\;(j)}$ & $0$ \\ 
\hline
$vv$ & $0$ & $0$ & $0$ & $0$ & $0$ \\ \hline
\end{tabular}%
\end{equation*}%
where

\emph{(i)} for $m=\dim \mathcal{T}=1,$ we have%
\begin{equation*}
\begin{array}{c}
T_{1j}^{r}=-A_{j1}^{r},\quad P_{i(1)}^{r(j)}=C_{i(1)}^{r(j)},\quad
P_{(r)1(1)}^{(1)\;\ (j)}=\dfrac{\partial \underset{1}{N}\overset{\left(
1\right) }{_{\left( r\right) 1}}}{\partial p_{j}^{1}}+A_{r1}^{j}-\delta
_{r}^{j}\chi _{11}^{1},\medskip \\ 
{P_{(r)i(1)}^{(1)\;(j)}={\dfrac{\partial \underset{2}{N}\overset{\left(
1\right) }{_{\left( r\right) i}}}{\partial p_{j}^{1}}+}}H_{ri}^{j}{,}\quad
R_{(r)1j}^{(1)}={{\dfrac{\delta \underset{1}{N}\overset{\left( 1\right) }{%
_{\left( r\right) 1}}}{\delta x^{j}}}-{\dfrac{\delta \underset{2}{N}\overset{%
\left( 1\right) }{_{\left( r\right) j}}}{\delta t},}}\medskip \\ 
{{R_{(r)ij}^{(1)}}={{\dfrac{\delta \underset{2}{N}\overset{\left( 1\right) }{%
_{\left( r\right) i}}}{\delta x^{j}}}-{\dfrac{\delta \underset{2}{N}\overset{%
\left( 1\right) }{_{\left( r\right) j}}}{\delta x^{i}}}};}%
\end{array}%
\end{equation*}

\emph{(ii)} for $m=\dim \mathcal{T}\geq 2,$ using the equality (\ref{aux_N2}%
) and the notations%
\begin{equation*}
\begin{array}{l}
\medskip \chi _{fab}^{c}=\dfrac{\partial \chi _{fa}^{c}}{\partial t^{b}}-%
\dfrac{\partial \chi _{fb}^{c}}{\partial t^{a}}+\chi _{fa}^{d}\chi
_{db}^{c}-\chi _{fb}^{d}\chi _{da}^{c}, \\ 
\mathfrak{R}_{kij}^{r}=\dfrac{\partial \Gamma _{ki}^{r}}{\partial x^{j}}-%
\dfrac{\partial \Gamma _{kj}^{r}}{\partial x^{i}}+\Gamma _{ki}^{p}\Gamma
_{pj}^{r}-\Gamma _{kj}^{p}\Gamma _{pi}^{r},%
\end{array}%
\end{equation*}%
we have%
\begin{equation*}
\begin{array}{l}
\medskip T_{aj}^{r}=-A_{ja}^{r},\quad P_{(r)a(b)}^{(f)\;\;(j)}=\delta
_{b}^{f}A_{ra}^{j},\quad R_{(r)ab}^{(f)}=\chi _{gab}^{f}p_{r}^{g}, \\ 
\medskip R_{(r)aj}^{(f)}=-{\dfrac{\partial \underset{2}{N}\overset{\left(
f\right) }{_{\left( r\right) j}}}{\partial t^{a}}}-\chi _{ca}^{f}T{%
_{(r)j}^{(c)}}, \\ 
R_{(r)ij}^{(f)}=-\mathfrak{R}_{rij}^{k}p_{k}^{f}+\left[
T_{(r)i|j}^{(f)}-T_{(r)j|i}^{(f)}\right] .%
\end{array}%
\end{equation*}
\end{theorem}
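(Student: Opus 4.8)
The plan is to obtain each entry of the table by specializing the universal torsion formulas for an arbitrary $h$-normal $N$-linear connection $D\Gamma(N)$ (established in \cite{Oana+Neag}) to the particular adapted coefficients $(\chi_{bc}^a, A_{jc}^i, H_{jk}^i, C_{j(c)}^{i(k)})$ of the Cartan connection. Recall that the torsion is $\mathbb{T}(X,Y)=D_XY-D_YX-[X,Y]$, and that its d-components are read off by evaluating $\mathbb{T}$ on all pairs drawn from the adapted basis $\{\delta/\delta t^a,\ \delta/\delta x^i,\ \partial/\partial p_i^a\}$ and projecting onto the horizontal-temporal, horizontal-spatial and vertical distributions. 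This accounts for the six rows ($h_{\mathcal{T}}h_{\mathcal{T}}$, $h_Mh_{\mathcal{T}}$, $vh_{\mathcal{T}}$, $h_Mh_M$, $vh_M$, $vv$) of the table.

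First I would compute, from the explicit formulas (\ref{bazele_expl}), the Lie brackets of every pair of adapted vector fields. The brackets among the horizontal fields $\delta/\delta t^a$ and $\delta/\delta x^i$ have purely vertical components, which by definition are the curvature d-tensors of the nonlinear connection $N$; these supply the $R$-type entries. The brackets of a horizontal field with a vertical field $\partial/\partial p_j^b$ supply the $P$-type entries. This step is where the objects $R_{(r)ab}^{(f)}$, $R_{(r)aj}^{(f)}$, $R_{(r)ij}^{(f)}$, $P_{(r)a(b)}^{(f)\;(j)}$, etc.\ originate, while the vertical fields commute and hence contribute no bracket in the $vv$ row.

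Next I would expand $D_XY$ on the adapted basis using the defining coefficients and collect terms. The purely connection-dependent contributions $D_XY-D_YX$ produce antisymmetrized coefficients: for the $h_Mh_{\mathcal{T}}$ slot these give $T_{1j}^r=-A_{j1}^r$ (resp.\ $T_{aj}^r=-A_{ja}^r$), while the symmetry $H_{jk}^i=H_{kj}^i$ forces the $h_M$-part of the $h_Mh_M$ torsion to vanish, and the symmetry $C_{j(c)}^{i(k)}=C_{j(c)}^{k(i)}$ forces the entire $vv$ torsion to vanish. Combining these with the bracket computations and splitting into the cases $m=1$ and $m\geq 2$ yields the two lists; for $m\geq 2$ one finally substitutes the reduced coefficients (\ref{Cartan-local-coeff}), namely $A_{bc}^a=\chi_{bc}^a$, $H_{jk}^i=\Gamma_{jk}^i$, $C_{j(c)}^{i(k)}=0$, which collapses several entries to closed form, e.g.\ $R_{(r)ab}^{(f)}=\chi_{gab}^fp_r^g$ from the curvature of the temporal Christoffel symbols.

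The main obstacle I anticipate is the horizontal-spatial vertical component $R_{(r)ij}^{(f)}$ for $m\geq 2$. Here the relevant nonlinear-connection coefficient is $\underset{2}{N}{}_{(r)j}^{(a)}=-\Gamma_{rj}^kp_k^a+T_{(r)j}^{(a)}$, so the bracket $[\delta/\delta x^i,\delta/\delta x^j]$ splits into two pieces: the $\Gamma$-part reproduces the spatial Riemann curvature $\mathfrak{R}_{rij}^k$, while the $T$-part must be reorganized into the commutator of $h_M$-covariant derivatives $T_{(r)i|j}^{(f)}-T_{(r)j|i}^{(f)}$. Recognizing that the $\delta$-differentiation of $T_{(r)j}^{(f)}$ assembles precisely into these covariant derivatives --- rather than into bare partial derivatives --- is the delicate bookkeeping step, and it is what produces the final expression $R_{(r)ij}^{(f)}=-\mathfrak{R}_{rij}^kp_k^f+\left[T_{(r)i|j}^{(f)}-T_{(r)j|i}^{(f)}\right]$.
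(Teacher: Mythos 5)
Your proposal is correct and follows essentially the same route as the paper, which obtains this theorem precisely by applying the general d-torsion formulas for an $h$-normal $N$-linear connection from \cite{Oana+Neag} to the Cartan canonical connection and then specializing via the reduced coefficients (\ref{Cartan-local-coeff}) and the decomposition $\underset{2}{N}{}_{(r)j}^{(a)}=-\Gamma_{rj}^kp_k^a+T_{(r)j}^{(a)}$. Your identification of the delicate point --- reassembling the $\delta$-derivatives of $T_{(r)j}^{(f)}$ into the covariant commutator $T_{(r)i|j}^{(f)}-T_{(r)j|i}^{(f)}$ --- matches the computation the paper relies on.
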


\begin{theorem}
The curvature tensor $\mathbb{R}$ of the Cartan connection $C\Gamma (N)$ of
the multi-time Hamilton space $MH_{m}^{n}$ is determined by the following
adapted local curvature d-tensors:%
\begin{equation*}
\begin{tabular}{|l|l|l|l|l|l|}
\hline
& $h_{\mathcal{T}}$ & \multicolumn{2}{|l|}{$h_{M}$} & \multicolumn{2}{|l|}{$%
v $} \\ \hline
& $m\geq 1$ & $m=1$ & $m\geq 2$ & $m=1$ & $m\geq 2$ \\ \hline
$h_{\mathcal{T}}h_{\mathcal{T}}$ & $\chi _{abc}^{d}$ & $0$ & $R_{ibc}^{l}$ & 
$0$ & $-R_{(l)(a)bc}^{(d)(i)}$ \\ \hline
$h_{M}h_{\mathcal{T}}$ & $0$ & $R_{i1k}^{l}$ & $R_{ibk}^{l}$ & $%
-R_{(i)(1)1k}^{(1)(l)}=-R_{i1k}^{l}$ & $-R_{(l)(a)bk}^{(d)(i)}$ \\ \hline
$vh_{\mathcal{T}}$ & $0$ & $P_{i1(1)}^{l\;\;(k)}$ & $0$ & $%
-P_{(i)(1)1(1)}^{(1)(l)\;(k)}=-P_{i1(1)}^{l\;\;(k)}$ & $0$ \\ \hline
$h_{M}h_{M}$ & $0$ & $R_{ijk}^{l}$ & $\mathfrak{R}_{ijk}^{l}$ & $%
-R_{(i)(1)jk}^{(1)(l)}=-R_{ijk}^{l}$ & $-R_{(l)(a)jk}^{(d)(i)}$ \\ \hline
$vh_{M}$ & $0$ & $P_{ij(1)}^{l\;(k)}$ & $0$ & $-P_{(i)(1)j(1)}^{(1)(l)%
\;(k)}=-P_{ij(1)}^{l\;(k)}$ & $0$ \\ \hline
$vv$ & $0$ & $S_{i(1)(1)}^{l(j)(k)}$ & $0$ & $%
-S_{(i)(1)(1)(1)}^{(1)(l)(j)(k)}=-S_{i(1)(1)}^{l(j)(k)}$ & $0$ \\ \hline
\end{tabular}%
\end{equation*}%
where, for $m=\dim \mathcal{T}\geq 2$, we have the relations%
\begin{equation*}
-R_{(l)(a)bc}^{(d)(i)}=\delta _{l}^{i}\chi _{abc}^{d}-\delta
_{a}^{d}R_{lbc}^{i},\quad -R_{(l)(a)bk}^{(d)(i)}=-\delta
_{a}^{d}R_{lbk}^{i},\quad -R_{(i)(a)jk}^{(d)(l)}=-\delta _{a}^{d}\mathfrak{R}%
_{ijk}^{l},
\end{equation*}%
and, generally, the following formulas are true:\bigskip

\emph{(i)} for $m=\dim \mathcal{T}=1,$ we have%
\begin{equation*}
\begin{array}{l}
\chi _{111}^{1}=0,\medskip \\ 
R_{i1k}^{l}=\dfrac{\delta A_{i1}^{l}}{\delta x^{k}}-\dfrac{\delta H_{ik}^{l}%
}{\delta t}+A_{i1}^{r}H_{rk}^{l}-H_{ik}^{r}A_{r1}^{l}+C_{i\left( 1\right)
}^{l\left( r\right) }R_{\left( r\right) 1k}^{\left( 1\right) },\medskip \\ 
R_{ijk}^{l}=\dfrac{\delta H_{ij}^{l}}{\delta x^{k}}-\dfrac{\delta H_{ik}^{l}%
}{\delta x^{j}}+H_{ij}^{r}H_{rk}^{l}-H_{ik}^{r}H_{rj}^{l}+C_{i\left(
1\right) }^{l\left( r\right) }R_{\left( r\right) jk}^{\left( 1\right)
},\medskip \\ 
P_{i1\left( 1\right) }^{l\ \left( k\right) }=\dfrac{\partial A_{i1}^{l}}{%
\partial p_{k}^{1}}-C_{i\left( 1\right) /1}^{l\left( k\right) }+C_{i\left(
1\right) }^{l\left( r\right) }P_{\left( r\right) 1\left( 1\right) }^{\left(
1\right) \ \left( k\right) },\medskip \\ 
P_{ij\left( 1\right) }^{l\ \left( k\right) }=\dfrac{\partial H_{ij}^{l}}{%
\partial p_{k}^{1}}-C_{i\left( 1\right) |j}^{l\left( k\right) }+C_{i\left(
1\right) }^{l\left( r\right) }P_{\left( r\right) j\left( 1\right) }^{\left(
1\right) \ \left( k\right) },\medskip \\ 
S_{i(1)(1)}^{l(j)(k)}=\dfrac{\partial C_{i(1)}^{l(j)}}{\partial p_{k}^{1}}-%
\dfrac{\partial C_{i(1)}^{l(k)}}{\partial p_{j}^{1}}%
+C_{i(1)}^{r(j)}C_{r(1)}^{l(k)}-C_{i(1)}^{r(k)}C_{r(1)}^{l(j)};%
\end{array}%
\end{equation*}

\emph{(ii)} for $m=\dim \mathcal{T}\geq 2,$ we have%
\begin{equation*}
\begin{array}{l}
\chi _{abc}^{d}=\dfrac{\partial \chi _{ab}^{d}}{\partial t^{c}}-\dfrac{%
\partial \chi _{ac}^{d}}{\partial t^{b}}+\chi _{ab}^{f}\chi _{fc}^{d}-\chi
_{ac}^{f}\chi _{fb}^{d},\medskip \\ 
R_{ibc}^{l}=\dfrac{\partial A_{ib}^{l}}{\partial t^{c}}-\dfrac{\partial
A_{ic}^{l}}{\partial t^{b}}+A_{ib}^{r}A_{rc}^{l}-A_{ic}^{r}A_{rb}^{l},%
\medskip \\ 
R_{ibk}^{l}=\dfrac{\partial A_{ib}^{l}}{\partial x^{k}}-\dfrac{\partial
\Gamma _{ik}^{l}}{\partial t^{b}}+A_{ib}^{r}\Gamma _{rk}^{l}-\Gamma
_{ik}^{r}A_{rb}^{l}{,}\medskip \\ 
\mathfrak{R}_{ijk}^{l}=\dfrac{\partial \Gamma _{ij}^{l}}{\partial x^{k}}-%
\dfrac{\partial \Gamma _{ik}^{l}}{\partial x^{j}}+\Gamma _{ij}^{r}\Gamma
_{rk}^{l}-\Gamma _{ik}^{r}\Gamma _{rj}^{l}.%
\end{array}%
\end{equation*}
\end{theorem}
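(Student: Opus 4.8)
The plan is to substitute the explicit Cartan coefficients $C\Gamma(N) = (\chi_{bc}^a,\, A_{jc}^i,\, H_{jk}^i,\, C_{j(c)}^{i(k)})$ into the general curvature formulas valid for an arbitrary $h$-normal $N$-linear connection $D\Gamma(N)$ on $E^*$, which are established in \cite{Oana+Neag}. Recall that the curvature operator is $\mathbb{R}(X,Y)Z = D_X D_Y Z - D_Y D_X Z - D_{[X,Y]}Z$; since an $N$-linear connection preserves each of the three distributions $h_\mathcal{T}$, $h_M$ and $v$, evaluating $\mathbb{R}$ on triples drawn from the adapted basis $\{\delta/\delta t^a,\, \delta/\delta x^i,\, \partial/\partial p_i^a\}$ yields exactly the block-structured family of d-tensors in the table: the rows record the pair of differentiation directions $(X,Y)$ and the columns record the sector in which the third argument (and hence the output) lives.

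First I would assemble the two inputs to the general formulas. The first is the list of Lie brackets of the adapted basis; by the defining relations \eqref{bazele_expl} these brackets are expressed through the derivatives of the nonlinear connection components $\underset{1}{N}$ and $\underset{2}{N}$, hence through precisely the torsion d-tensors $R_{(r)ab}^{(f)}$, $R_{(r)aj}^{(f)}$, $R_{(r)ij}^{(f)}$, $P_{(r)a(b)}^{(f)\,(j)}$, \dots\ already catalogued in the preceding theorem. The second is the action of $D$ on the adapted basis, encoded by $\chi, A, H, C$ on the horizontal generators $\partial/\partial t^a$ and $\partial/\partial x^i$ and, by $h$-normality together with the duality of the vertical bundle spanned by $\partial/\partial p_i^a$, by the transposed and sign-reversed coefficients on the vertical generators. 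Differentiating twice, subtracting, and re-expressing everything in the adapted frame then yields the general expressions of items (i) and (ii); this step is mechanical once the brackets are fixed.

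Next I would split according to $m = \dim\mathcal{T}$. For $m = 1$ the coefficients are the generic Cartan ones, and the five local curvatures $R_{i1k}^l$, $R_{ijk}^l$, $P_{i1(1)}^{l\,(k)}$, $P_{ij(1)}^{l\,(k)}$, $S_{i(1)(1)}^{l(j)(k)}$ drop out of the standard $h$-normal formulas, each carrying its correction term $C_{i(1)}^{l(r)}R_{(r)\cdots}^{(1)}$ or a vertical $C$-derivative. For $m \geq 2$ I would invoke the reduction \eqref{Cartan-local-coeff}: because $A_{bc}^a = \chi_{bc}^a$, $H_{jk}^i = \Gamma_{jk}^i$ and, crucially, $C_{j(c)}^{i(k)} = 0$, every curvature d-tensor containing a factor of $C$ vanishes identically, which is exactly every $v$-type entry (the $P$'s and $S$) together with the associated mixed terms. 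This accounts for the many zeros in the $m \geq 2$ columns and collapses the surviving horizontal curvatures to the pure Riemann-type expressions $\chi_{abc}^d$, $R_{ibc}^l$, $R_{ibk}^l$ and $\mathfrak{R}_{ijk}^l$ built only from $\chi$ and $\Gamma$.

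The step requiring the most care is confirming the stated identifications between the vertical and horizontal curvature d-tensors, namely $-R_{(l)(a)bc}^{(d)(i)} = \delta_l^i\chi_{abc}^d - \delta_a^d R_{lbc}^i$ and its companions $-R_{(l)(a)bk}^{(d)(i)} = -\delta_a^d R_{lbk}^i$ and $-R_{(i)(a)jk}^{(d)(l)} = -\delta_a^d\mathfrak{R}_{ijk}^l$. These are not independent computations: they follow from the fact that $D$ acts on $\partial/\partial p_i^a$ dually to its action on $\delta/\delta x^i$ and $\delta/\delta t^a$, so the vertical curvature is recovered from the horizontal one by a Kronecker-delta contraction in the temporal slots $a,d$ together with a sign reversal. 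I would obtain them by tracking the $p$-indices through the very same double commutator that produces the horizontal curvatures, checking in particular that the temporal coefficient $\chi$ supplies the $\delta_l^i\chi_{abc}^d$ term present only in the $h_\mathcal{T}h_\mathcal{T}$ row. Once these duality relations are verified, the entire table follows by reading off each block.
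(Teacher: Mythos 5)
Your proposal is correct and follows essentially the same route as the paper, which obtains this theorem precisely by applying the general d-curvature formulas for an arbitrary $h$-normal $N$-linear connection from \cite{Oana+Neag} to the Cartan canonical connection and then specializing via the reduced coefficients (\ref{Cartan-local-coeff}) in the case $m\geq 2$. Your additional remarks on the adapted-basis brackets and on the duality origin of the relations $-R_{(l)(a)bc}^{(d)(i)}=\delta _{l}^{i}\chi _{abc}^{d}-\delta _{a}^{d}R_{lbc}^{i}$ are consistent with that derivation.
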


In the next Sections, following the physical and geometrical ideas of the
already classical Lagrangian geometry of physical fields (see \cite%
{Miro+Anas}, \cite{Miro+Hrim+Shim+Saba} and \cite{Neag1}), we construct a
possible multi-time Hamiltonian approach of the electromagnetic and
gravitational physical fields, which is characterized by some natural
geometrical Maxwell-like and Einstein-like equations. To reach this aim, we
consider a multi-time Hamilton space $MH_{m}^{n}=\left( J^{1\ast }\left(
T,M\right) ,H\right) $ endowed with its canonical nonlinear connection%
\begin{equation*}
N=\left( \underset{1}{N}\text{{}}_{(i)b}^{(a)},\text{ }\underset{2}{N}\text{%
{}}_{(i)j}^{(a)}\right) ,
\end{equation*}%
and we also consider the Cartan canonical connection of the space $%
MH_{m}^{n} $, which is locally expressed by%
\begin{equation*}
C\Gamma (N)=\left( \chi _{bc}^{a},\text{ }A_{jc}^{i},\text{ }H_{jk}^{i},%
\text{ }C_{j\left( c\right) }^{i\left( k\right) }\right) .
\end{equation*}

\section{Multi-time Hamilton electromagnetism. Geometrical Maxwell-like
equations}

Let us consider the \textit{canonical Liouville-Hamilton d-tensor field of
polymomenta}%
\begin{equation*}
\mathbb{C}^{\ast }{=p_{i}^{a}{\dfrac{\partial }{\partial p_{i}^{a}},}}
\end{equation*}%
together with the fundamental vertical metrical d-tensor $%
G_{(a)(b)}^{(i)(j)} $ of the multi-time Hamilton spaces $MH_{m}^{n}$. These
geometrical objects allow us to construct the \textit{metrical deflection
d-tensors}%
\begin{equation*}
\begin{array}{c}
\Delta _{(a)b}^{(i)}=G_{(a)(c)}^{(i)(k)}\Delta
_{(k)b}^{(c)}=p_{(a)/b}^{(i)},\medskip \quad \Delta
_{(a)j}^{(i)}=G_{(a)(c)}^{(i)(k)}\Delta _{(k)j}^{(c)}=p_{(a)|j}^{(i)}, \\ 
\vartheta _{(a)(b)}^{(i)(j)}=G_{(a)(c)}^{(i)(k)}\vartheta
_{(k)(b)}^{(c)(j)}=p_{(a)}^{(i)}|_{(b)}^{(j)},%
\end{array}%
\end{equation*}%
where ${p_{\left( a\right) }^{\left( i\right) }=}G_{(a)(c)}^{(i)(k)}{%
p_{k}^{c}}$ and "$_{/b}$", "$_{|j}$"\ and "$|_{(b)}^{(j)}$" are the local
covariant derivatives induced by the Cartan connection $C\Gamma (N)$.

Taking into account the expressions of the local covariant derivatives of
the Cartan connection $C\Gamma (N)$ (see the paper \cite{Oana+Neag}), by a
direct calculation, we obtain

\begin{proposition}
\label{defl-metr-NO}The metrical deflection d-tensors of the multi-time
Hamilton spaces $MH_{m}^{n}$ have the expresions:\medskip

\emph{(i)} for $m=\dim \mathcal{T=}1$, we have%
\begin{equation}
\begin{array}{c}
\Delta _{(1)1}^{(i)}=-h_{11}g^{ik}A_{k1}^{r}p_{r}^{1},\medskip \quad\Delta
_{(1)j}^{(i)}=h_{11}g^{ik}\left[ -\underset{2}{N}\overset{\left( 1\right) }{%
_{\left( k\right) j}}-H_{kj}^{r}p_{r}^{1}\right] , \\ 
\vartheta _{(1)(1)}^{(i)(j)}=h_{11}g^{ij}-h_{11}g^{ik}C_{k\left( 1\right)
}^{r\left( j\right) }p_{r}^{1};%
\end{array}
\label{defl-metr-1}
\end{equation}

\emph{(ii)} for $m=\dim \mathcal{T}\geq 2$, we have%
\begin{equation}
\begin{array}{c}
\Delta _{(a)b}^{(i)}=-h_{ac}g^{ik}A_{kb}^{r}p_{r}^{c},\medskip \quad\Delta
_{(a)j}^{(i)}=-\dfrac{g^{ik}}{4}\left( U_{ka\bullet j}+U_{ja\bullet
k}\right) , \\ 
\vartheta _{(a)(b)}^{(i)(j)}=h_{ab}g^{ij}.%
\end{array}
\label{defl-metr-2}
\end{equation}
\end{proposition}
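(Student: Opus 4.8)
The plan is to reduce everything to the covariant derivatives of the bare polymomenta $p_{k}^{c}$ and then to raise/lower with the fundamental metrical d-tensor. The starting observation is that, by Kronecker $h$-regularity, $p_{(a)}^{(i)}=G_{(a)(c)}^{(i)(k)}p_{k}^{c}=h_{ac}g^{ik}p_{k}^{c}$, and that by the metrical properties recorded in the Cartan-connection theorem and the subsequent Remark ($g_{ij|k}=0$, $g^{ij}|_{(c)}^{(k)}=0$, $g_{ij/c}=0$, $h_{ab/c}=h_{ab|k}=h_{ab}|_{(c)}^{(k)}=0$), both $h_{ac}$ and $g^{ik}$ are parallel for all three covariant derivatives of $C\Gamma(N)$. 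Hence raising/lowering with $G$ commutes with each covariant derivative, and it suffices to compute $p_{k/b}^{c}$, $p_{k|j}^{c}$ and $p_{k}^{c}|_{(b)}^{(j)}$, after which one contracts with $h_{ac}g^{ik}$.

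First I would compute the three covariant derivatives of $p_{k}^{c}$ from the general covariant-derivative rules of an $h$-normal $N$-linear connection (see \cite{Oana+Neag}). For the temporal derivative, since $\partial p_{k}^{c}/\partial t^{b}=0$ one has $\delta p_{k}^{c}/\delta t^{b}=-\underset{1}{N}{}_{(k)b}^{(c)}=-\chi_{bd}^{c}p_{k}^{d}$; adding the connection terms $+\chi_{db}^{c}p_{k}^{d}$ (for the upper temporal index) and $-A_{kb}^{r}p_{r}^{c}$ (for the lower spatial index) and using the symmetry $\chi_{bd}^{c}=\chi_{db}^{c}$ yields the clean cancellation $p_{k/b}^{c}=-A_{kb}^{r}p_{r}^{c}$. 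For the spatial derivative, $\delta p_{k}^{c}/\delta x^{j}=-\underset{2}{N}{}_{(k)j}^{(c)}$ and the only connection term is $-H_{kj}^{r}p_{r}^{c}$, so $p_{k|j}^{c}=-\underset{2}{N}{}_{(k)j}^{(c)}-H_{kj}^{r}p_{r}^{c}$. For the vertical derivative, $\partial p_{k}^{c}/\partial p_{j}^{b}=\delta_{k}^{j}\delta_{b}^{c}$ and the $C$-term is $-C_{k(b)}^{r(j)}p_{r}^{c}$, giving $p_{k}^{c}|_{(b)}^{(j)}=\delta_{k}^{j}\delta_{b}^{c}-C_{k(b)}^{r(j)}p_{r}^{c}$.

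Contracting each of these with $h_{ac}g^{ik}$ then produces the stated formulas. In the case $m=1$ this is essentially the end of the computation (with $c=b=1$), producing the three expressions in (\ref{defl-metr-1}). In the case $m\geq 2$ I would substitute the reduced Cartan coefficients from (\ref{Cartan-local-coeff}): $H_{jk}^{i}=\Gamma_{jk}^{i}$ and $C_{j(c)}^{i(k)}=0$. The vertical tensor then collapses to $\vartheta_{(a)(b)}^{(i)(j)}=h_{ac}g^{ik}\delta_{k}^{j}\delta_{b}^{c}=h_{ab}g^{ij}$. The spatial tensor requires combining $\underset{2}{N}{}_{(k)j}^{(c)}=-\Gamma_{kj}^{l}p_{l}^{c}+T_{(k)j}^{(c)}$ with $H_{kj}^{r}=\Gamma_{kj}^{r}$, so that the two $\Gamma$-contributions cancel and leave $p_{k|j}^{c}=-T_{(k)j}^{(c)}$; contracting with $h_{ac}g^{ik}$ and using (\ref{aux_N2}) together with $h_{ac}h^{cd}=\delta_{a}^{d}$ converts the result into the announced $U$-form $\Delta_{(a)j}^{(i)}=-\tfrac{g^{ik}}{4}(U_{ka\bullet j}+U_{ja\bullet k})$.

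The only genuinely delicate points are the sign/symmetry bookkeeping in the temporal derivative (the cancellation of $\delta p_{k}^{c}/\delta t^{b}$ against the $\chi$-term, which hinges on the symmetry of the temporal Christoffel symbols) and, for $m\geq 2$, the cancellation of the spatial Christoffel contributions in $p_{k|j}^{c}$; everything else is a routine contraction with $h_{ac}g^{ik}$ and a substitution of the known local coefficients. I expect the main care to go into differentiating the \emph{vertical} index pair of the momenta with the correct coefficients and signs, namely in verifying that, because $C\Gamma(N)$ is $h$-normal, no spurious temporal-index correction appears in the spatial and vertical covariant derivatives.
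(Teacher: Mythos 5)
Your proposal is correct and follows essentially the same route as the paper, which simply invokes ``a direct calculation'' from the covariant-derivative formulas of $C\Gamma(N)$: you compute $p_{k/b}^{c}=-A_{kb}^{r}p_{r}^{c}$, $p_{k|j}^{c}=-\underset{2}{N}{}_{(k)j}^{(c)}-H_{kj}^{r}p_{r}^{c}$ and $p_{k}^{c}|_{(b)}^{(j)}=\delta_{k}^{j}\delta_{b}^{c}-C_{k(b)}^{r(j)}p_{r}^{c}$, contract with $G_{(a)(c)}^{(i)(k)}=h_{ac}g^{ik}$, and for $m\geq 2$ substitute the reduced coefficients (\ref{Cartan-local-coeff}) and (\ref{aux_N2}), all of which reproduce (\ref{defl-metr-1}) and (\ref{defl-metr-2}) exactly. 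The sign and symmetry points you flag (the $\chi$-cancellation in the temporal derivative, the $\Gamma$-cancellation in $p_{k|j}^{c}$, and the absence of temporal corrections because the connection is $h$-normal) are precisely the delicate steps, and you handle them correctly.
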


In order to construct our metrical multi-time Hamiltonian theory of
electromagnetism, we introduce the following concept:

\begin{definition}
The distinguished 2-form on $J^{1\ast }\left( \mathcal{T},M\right) ,$
locally defined by 
\begin{equation}
\mathbb{F}=F_{(a)j}^{(i)}\delta p_{i}^{a}\wedge
dx^{j}+f_{(a)(b)}^{(i)(j)}\delta p_{i}^{a}\wedge \delta p_{j}^{b},
\label{electromagnetic-NO}
\end{equation}%
where%
\begin{equation}
F_{(a)j}^{(i)}={{\dfrac{1}{2}}\left[ \Delta _{(a)j}^{(i)}-\Delta
_{(a)i}^{(j)}\right] },\qquad f_{(a)(b)}^{(i)(j)}={{\dfrac{1}{2}}\left[
\vartheta _{(a)(b)}^{(i)(j)}-\vartheta _{(a)(b)}^{(j)(i)}\right] },
\label{(F_ij )si (f_ij)-NO}
\end{equation}%
is called the \textbf{multi-time electromagnetic field of the metrical
multi-time Hamilton space }$MH_{m}^{n}$.
\end{definition}

By a straightforward calculation, the Proposition \ref{defl-metr-NO} implies

\begin{proposition}
The components $F_{(a)j}^{(i)}$ and $f_{(a)(b)}^{(i)(j)}$ of the multi-time
electromagnetic field $\mathbb{F}$, associated to the multi-time Hamilton
space $MH_{m}^{n}$, have the following expressions:\medskip

\emph{(i)} in the case $m=\dim \mathcal{T}=1$, we have%
\begin{equation*}
F_{(1)j}^{(i)}={\frac{h^{11}}{2}}\left[ g^{jk}\underset{2}{N}\overset{\left(
1\right) }{_{\left( k\right) i}}-g^{ik}\underset{2}{N}\overset{\left(
1\right) }{_{\left( k\right) j}}+\left(
g^{jk}H_{ki}^{r}-g^{ik}H_{kj}^{r}\right) p_{r}^{1}\right] ,\quad
f_{(1)(1)}^{(i)(j)}=0;
\end{equation*}

\emph{(ii)} in the case $m=\dim \mathcal{T}\geq 2$, we have%
\begin{equation*}
F_{(a)j}^{(i)}=\frac{1}{8}\left[ g^{jk}U_{ka\bullet i}-g^{ik}U_{ka\bullet
j}+g^{jk}U_{ia\bullet k}-g^{ik}U_{ja\bullet k}\right] ,\quad
f_{(a)(b)}^{(i)(j)}=0.
\end{equation*}
\end{proposition}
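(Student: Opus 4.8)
The plan is to obtain both families of components by substituting the deflection d-tensors of Proposition~\ref{defl-metr-NO} directly into the antisymmetrization formulas (\ref{(F_ij )si (f_ij)-NO}) and then simplifying, treating the cases $m=\dim\mathcal{T}=1$ and $m\geq 2$ separately. The computation of the $F_{(a)j}^{(i)}$ is pure index bookkeeping, whereas the genuine content lies in proving that the vertical components $f_{(a)(b)}^{(i)(j)}$ vanish.

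For the $F$-components I would argue as follows. In the case $m\geq 2$, inserting $\Delta_{(a)j}^{(i)}=-\tfrac{g^{ik}}{4}(U_{ka\bullet j}+U_{ja\bullet k})$ into $F_{(a)j}^{(i)}=\tfrac12[\Delta_{(a)j}^{(i)}-\Delta_{(a)i}^{(j)}]$, and noting that $\Delta_{(a)i}^{(j)}$ is the same expression with $i$ and $j$ interchanged, namely $-\tfrac{g^{jk}}{4}(U_{ka\bullet i}+U_{ia\bullet k})$, I collect the four resulting terms over the common factor $\tfrac18$ to recover the stated formula. In the case $m=1$ the same substitution is made with the longer expression for $\Delta_{(1)j}^{(i)}$; antisymmetrizing in $i,j$ reorders the $\underset{2}{N}{}_{(k)j}^{(1)}$- and $H_{kj}^{r}p_{r}^{1}$-pieces into exactly the combination displayed, with the scalar temporal-metric factor and the $\tfrac12$ factoring out of the square bracket.

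The vanishing of the $f$-components is the main obstacle. For $m\geq 2$ it is immediate: since $\vartheta_{(a)(b)}^{(i)(j)}=h_{ab}g^{ij}$ is symmetric in the upper pair $i,j$ (because $g$ is symmetric), the antisymmetrization $f_{(a)(b)}^{(i)(j)}=\tfrac12[\vartheta_{(a)(b)}^{(i)(j)}-\vartheta_{(a)(b)}^{(j)(i)}]$ is zero. For $m=1$ the symmetric piece $h_{11}g^{ij}$ again drops out under antisymmetrization, so everything reduces to showing that the remaining term $h_{11}g^{ik}C_{k(1)}^{r(j)}p_{r}^{1}$ is symmetric under $i\leftrightarrow j$. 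Substituting the explicit Cartan coefficient $C_{k(1)}^{r(j)}=-\tfrac{g_{ks}}{2}\big(\partial g^{rs}/\partial p_{j}^{1}+\partial g^{js}/\partial p_{r}^{1}-\partial g^{rj}/\partial p_{s}^{1}\big)$ and contracting with $g^{ik}$ (so that $g^{ik}g_{ks}=\delta^{i}_{s}$) expresses this term through the first momentum-derivatives $\partial g^{\alpha\beta}/\partial p_{\gamma}^{1}$. The key step is to observe that, by the Corollary giving (\ref{FVDT}), one has $h_{11}g^{ij}=\tfrac12\,\partial^{2}H/\partial p_{i}^{1}\partial p_{j}^{1}$, whence, since $h_{11}$ depends only on $t$, the derivative $\partial g^{ri}/\partial p_{j}^{1}=(2h_{11})^{-1}\,\partial^{3}H/\partial p_{r}^{1}\partial p_{i}^{1}\partial p_{j}^{1}$ is a third-order partial derivative and therefore totally symmetric in $r,i,j$. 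This total symmetry makes $g^{ik}C_{k(1)}^{r(j)}$ symmetric in all three indices $r,i,j$, in particular in $i,j$; hence the contracted term is symmetric and $f_{(1)(1)}^{(i)(j)}=0$, completing the argument.
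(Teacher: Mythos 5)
Your proof is correct and coincides with the paper's own route: the paper gives no argument beyond ``by a straightforward calculation, Proposition \ref{defl-metr-NO} implies'', which is exactly the substitution of the metrical deflection d-tensors into the antisymmetrization formulas defining $F_{(a)j}^{(i)}$ and $f_{(a)(b)}^{(i)(j)}$ that you carry out, and your justification of $f_{(1)(1)}^{(i)(j)}=0$ for $m=1$ via the total symmetry of $\partial g^{ij}/\partial p_{k}^{1}$ (a third momentum-derivative of $H$, since $h_{11}$ depends only on $t$) correctly supplies the one genuinely non-obvious step the paper leaves implicit. The only mismatch is that direct substitution yields the factor $h_{11}$ in $F_{(1)j}^{(i)}$, whereas the printed statement shows $h^{11}$; this is a typo on the paper's side (its own Proposition \ref{defl-metr-NO} carries $h_{11}$), not a gap in your argument.
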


The main result of our abstract geometrical Hamilton multi-time
electromagnetism is given by

\begin{theorem}
The electromagnetic components $F_{(a)j}^{(i)}$ of the multi-time Hamilton
space $MH_{m}^{n}\ $are governed by the following \textbf{geometrical
Maxwell-like equations}: \medskip

\emph{(i)} for $m=\dim \mathcal{T}=1$, we have%
\begin{equation*}
\left\{ 
\begin{array}{l}
\medskip {F_{(1)k/1}^{(i)}={\dfrac{1}{2}}\mathcal{A}_{\{i,k\}}\left\{ \Delta
_{(1)1|k}^{(i)}+\Delta _{(1)r}^{(i)}T_{1k}^{r}+\vartheta
_{(1)(1)}^{(i)(r)}R_{(r)1k}^{(1)}+R_{r1k}^{i}p_{\left( 1\right) }^{\left(
r\right) }\right\} } \\ 
\medskip {\sum_{\{i,j,k\}}F_{(1)j|k}^{(i)}=-{\dfrac{1}{2}}\sum_{\{i,j,k\}}}%
\left\{ {\vartheta _{(1)(1)}^{(i)(r)}R_{(r)jk}^{(1)}+}R_{rjk}^{i}p_{\left(
1\right) }^{\left( r\right) }\right\} \\ 
{F_{(1)j}^{(i)}|_{(1)}^{(k)}={\dfrac{1}{2}}\mathcal{A}_{\{i,j\}}}\left\{ {%
\vartheta _{(1)(1)|j}^{(i)(k)}-}P_{rj(1)}^{i\;\;(k)}p_{\left( 1\right)
}^{\left( r\right) }-\Delta _{(1)r}^{(i)}C_{j\left( 1\right) }^{r\left(
k\right) }-{\vartheta _{(1)(1)}^{(i)(r)}}P_{(r)j(1)}^{(1)\;\;(k)}\right\} {;}%
\end{array}%
\right.
\end{equation*}

\emph{(ii)} for $m=\dim \mathcal{T}\geq 2$, we have%
\begin{equation*}
\left\{ 
\begin{array}{l}
\medskip {F_{(a)k/b}^{(i)}={\dfrac{1}{2}}\mathcal{A}_{\{i,k\}}\left\{ \Delta
_{(a)b|k}^{(i)}+\Delta _{(a)r}^{(i)}T_{bk}^{r}+\vartheta
_{(a)(f)}^{(i)(r)}R_{(r)bk}^{(f)}+R_{rbk}^{i}p_{\left( a\right) }^{\left(
r\right) }\right\} } \\ 
\medskip {\sum_{\{i,j,k\}}F_{(a)j|k}^{(i)}=-{\dfrac{1}{2}}\sum_{\{i,j,k\}}}%
\left\{ {\vartheta _{(a)(f)}^{(i)(r)}R_{(r)jk}^{(f)}+}\mathfrak{R}%
_{rjk}^{i}p_{\left( a\right) }^{\left( r\right) }\right\} \\ 
\sum_{\{i,j,k\}}{F_{(a)j}^{(i)}|_{(c)}^{(k)}={0}},%
\end{array}%
\right.
\end{equation*}%
where ${\mathcal{A}_{\{i,j\}}}$ means an alternate sum, ${\sum_{\{i,j,k\}}}$
means a cyclic sum, and we used the notations $%
p_{(1)}^{(i)}=G_{(1)(1)}^{(i)(j)}p_{j}^{1}$ and $%
p_{(a)}^{(i)}=G_{(a)(b)}^{(i)(j)}p_{j}^{b}$.
\end{theorem}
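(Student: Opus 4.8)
The plan is to derive the Maxwell-like equations by starting from the defining relations $F_{(a)j}^{(i)} = \tfrac{1}{2}\bigl[\Delta_{(a)j}^{(i)} - \Delta_{(a)i}^{(j)}\bigr]$ and applying the three covariant derivatives of the Cartan connection $C\Gamma(N)$—namely ``$_{/b}$'', ``$_{|k}$'' and ``$|_{(c)}^{(k)}$''—and then repackaging the results using the Ricci-type identities that relate second covariant derivatives of the deflection d-tensor $p_{(a)}^{(i)}$ to the curvature and torsion d-tensors computed in the preceding two theorems. The whole argument is a bookkeeping computation: the key inputs are Proposition~\ref{defl-metr-NO} (the explicit deflection tensors) together with the full lists of d-torsions and d-curvatures, so no genuinely new geometric object needs to be introduced.

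**First I would** recall that the metrical deflection d-tensors are precisely the covariant derivatives of the lowered Liouville field, $\Delta_{(a)b}^{(i)}=p_{(a)/b}^{(i)}$, $\Delta_{(a)j}^{(i)}=p_{(a)|j}^{(i)}$ and $\vartheta_{(a)(b)}^{(i)(j)}=p_{(a)}^{(i)}|_{(b)}^{(j)}$. Then I would write down the Ricci identities for $C\Gamma(N)$ applied to the vector field $p_{(a)}^{(i)}$; these express the skew-symmetrized second derivatives $p_{(a)/b|k}^{(i)}-p_{(a)|k/b}^{(i)}$, $p_{(a)|j|k}^{(i)}-p_{(a)|k|j}^{(i)}$, and the mixed $v$-$h$ versions in terms of the curvature tensors $R_{rbk}^{i}$, $\mathfrak{R}_{rjk}^{i}$, $P_{rj(1)}^{i\,(k)}$ acting on the index $i$, plus the torsion terms $T_{bk}^{r}$, $C_{j(c)}^{r(k)}$ contracted against the first derivatives. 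Substituting the deflection tensors for these first derivatives is what converts the raw Ricci identities into statements about $\Delta$ and $\vartheta$.

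**The core manipulation** is then to take each Ricci identity, split the covariant derivative of $F_{(a)j}^{(i)}$ as $\tfrac12$ times an alternation of $\Delta$-derivatives, and use the identity to trade one order of the second derivatives for curvature/torsion contractions; the symmetry properties $H_{jk}^{i}=H_{kj}^{i}$ and $C_{j(c)}^{i(k)}=C_{j(c)}^{k(i)}$ from the Cartan theorem are what make the alternating and cyclic sums collapse to the stated right-hand sides. For the first equation I alternate over $\{i,k\}$ in $F_{(a)k/b}^{(i)}$; for the second I take the cyclic sum over $\{i,j,k\}$ of $F_{(a)j|k}^{(i)}$, where a Bianchi-type cyclic identity for $R_{(r)jk}^{(f)}$ and $\mathfrak{R}_{rjk}^{i}$ is used; for the third I differentiate vertically and exploit that the purely vertical torsion and curvature entries vanish (the $vv$ rows are zero in both theorems), which is precisely why the $m\ge 2$ case degenerates to $\sum_{\{i,j,k\}}F_{(a)j}^{(i)}|_{(c)}^{(k)}=0$.

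**The main obstacle** I anticipate is the careful tracking of which curvature and torsion components survive in each of the two regimes $m=1$ and $m\ge 2$: in the $m=1$ case the vertical d-tensors $C_{j(1)}^{i(k)}$, $P_{(r)j(1)}^{(1)\,(k)}$ and $S_{i(1)(1)}^{l(j)(k)}$ are generally nonzero, so the third equation keeps its full $\vartheta|_{j}$, $P$, $C$ terms, whereas for $m\ge 2$ the reduction $C_{j(c)}^{i(k)}=0$ and $\vartheta_{(a)(b)}^{(i)(j)}=h_{ab}g^{ij}$ kills those same contributions. Getting the index placement and the semi-Riemannian raising/lowering by $h^{ab}$ and $g^{ij}$ exactly right—so that the contracted curvature term reads $R_{rbk}^{i}p_{(a)}^{(r)}$ rather than some misplaced variant—is the delicate part; everything else is a disciplined application of the already-established torsion and curvature tables.
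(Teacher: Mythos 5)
Your proposal follows essentially the same route as the paper: the paper likewise applies the Ricci identities to the deflection d-tensors $p_{i/b}^{a}$, $p_{i|j}^{a}$, $p_{i}^{a}|_{(b)}^{(j)}$ (then contracts with $G_{(a)(d)}^{(i)(p)}$ and uses the antisymmetry of the curvature obtained from the Ricci identity on $g^{ij}$, which by metricity is equivalent to your working directly with $p_{(a)}^{(i)}$), and then alternates over $\{i,k\}$, resp.\ $\{i,j\}$, and takes the cyclic sum over $\{i,j,k\}$. One small correction: no Bianchi-type identity is needed for the second equation---the cyclic sum of the curvature contractions simply remains on the right-hand side---and the $m\geq 2$ collapse of the third equation comes from $C_{j(c)}^{i(k)}=0$, the vanishing of the $vh_{M}$ torsion and curvature entries, and the covariant constancy of $\vartheta _{(a)(b)}^{(i)(j)}=h_{ab}g^{ij}$, rather than from the $vv$ rows.
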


\begin{proof}
The general Ricci identities (see \cite{Oana+Neag} and \cite{Oana+Neag-2})
applied to $g^{ij}$ lead us to the equalities:%
\begin{equation}
\begin{array}{c}
g^{ir}R_{rbk}^{j}+g^{jr}R_{rbk}^{i}=0,\medskip\qquad
g^{ir}R_{rkl}^{j}+g^{jr}R_{rkl}^{i}=0, \\ 
g^{ir}P_{rk(c)}^{j\text{ \ }(l)}+g^{jr}P_{rk(c)}^{i\text{ \ }(l)}=0.%
\end{array}
\label{anti-simetrie-curvature}
\end{equation}%
Let us consider the following non-metrical deflection d-tensor identities (%
\cite{Oana+Neag}):\medskip

$(d_{1})\;\;\Delta _{(p)b|k}^{(d)}-\Delta
_{(p)k/b}^{(d)}=p_{r}^{d}R_{pbk}^{r}-\Delta
_{(p)r}^{(d)}T_{bk}^{r}-\vartheta _{(p)(f)}^{(d)(r)}R_{(r)bk}^{(f)},$\medskip

$(d_{2})\;\;\Delta _{(p)j|k}^{(d)}-\Delta
_{(p)k|j}^{(d)}=p_{r}^{d}R_{pjk}^{r}-\vartheta
_{(p)(f)}^{(d)(r)}R_{(r)jk}^{(f)},$\medskip

$(d_{3})\;\;\Delta _{(p)j}^{(d)}|_{(c)}^{(k)}-\vartheta
_{(p)(c)|j}^{(d)(k)}=p_{r}^{d}P_{pj(c)}^{r\;\;(k)}-\Delta
_{(p)r}^{(d)}C_{j(c)}^{r(k)}-\vartheta _{(p)(f)}^{(d)(r)}P_{(r)j(c)}^{(f)\;\
(k)}$,\medskip \newline
where $\Delta _{(i)b}^{(a)}=p_{i/b}^{a}$, $\Delta _{(i)j}^{(a)}=p_{i|j}^{a}$%
, $\vartheta _{(i)(b)}^{(a)(j)}=p_{i}^{a}|_{(b)}^{(j)}$.\medskip

Contracting the above deflection d-tensor identities with the fundamental
vertical metrical d-tensor $G_{(a)(d)}^{(i)(p)}$, and using the equalities (%
\ref{anti-simetrie-curvature}), we obtain the following \textit{metrical
deflection d-tensor identities:}\medskip

$(d_{1}^{\prime })\;\;\Delta _{(a)b|k}^{(i)}-\Delta
_{(a)k/b}^{(i)}=-p_{(a)}^{(r)}R_{rbk}^{i}-\Delta
_{(a)r}^{(i)}T_{bk}^{r}-\vartheta _{(a)(f)}^{(i)(r)}R_{(r)bk}^{(f)},$\medskip

$(d_{2}^{\prime })\;\;\Delta _{(a)j|k}^{(i)}-\Delta
_{(a)k|j}^{(i)}=-p_{(a)}^{(r)}R_{rjk}^{i}-\vartheta
_{(a)(f)}^{(i)(r)}R_{(r)jk}^{(f)},$\medskip

$(d_{3}^{\prime })\;\;\Delta _{(a)j}^{(i)}|_{(c)}^{(k)}-\vartheta
_{(a)(c)|j}^{(i)(k)}=-p_{(a)}^{(r)}P_{rj(c)}^{i\;\;(k)}-\Delta
_{(a)r}^{(i)}C_{j(c)}^{r(k)}-\vartheta _{(a)(f)}^{(i)(r)}P_{(r)j(c)}^{(f)\;\
(k)}$.\medskip

To obtain the first (respectively, the third) geometrical Maxwell-like
equation, we permute the indices $i$ and $k$ in the identity $(d_{1}^{\prime
})$ (respectively, the indices $i$ and $j$ in the identity $(d_{3}^{\prime
}) $), and we subtract this new identity from the initial one. For $m=\dim 
\mathcal{T}\geq 2$ we find the following new identity:%
\begin{equation*}
{F_{(a)j}^{(i)}|_{(c)}^{(k)}={\dfrac{1}{2}}}\left[ \vartheta
_{(a)(c)|j}^{(i)(k)}{-}\vartheta _{(a)(c)|i}^{(j)(k)}\right] .
\end{equation*}%
Consequently, doing a cyclic sum by $\{i,j,k\}$ for $m\geq 2$, we obtain
what we were looking for.

Doing a cyclic sum after the indices $\{i,j,k\}$ in the identity $%
(d_{2}^{\prime })$, it follows the second geometrical Maxwell-like equation.
\end{proof}

\section{Multi-time Hamilton gravitational field. Geometrical Einstein-like
equations}

Let us consider that $h=(h_{ab}(t))$ is a fixed semi-Riemannian metric on
the temporal manifold $\mathcal{T}$ and let 
\begin{equation*}
N=\left( \underset{1}{N}\text{{}}_{(i)b}^{(a)},\text{ }\underset{2}{N}\text{%
{}}_{(i)j}^{(a)}\right)
\end{equation*}%
be an \textquotedblright \textit{a priori}\textquotedblright\ given
nonlinear connection on the dual $1$-jet space $J^{1\ast }(\mathcal{T},M)$.
Let 
\begin{equation*}
\delta p_{i}^{a}=dp_{i}^{a}+\underset{1}{N}\text{{}}_{(i)b}^{(a)}dt^{b}+%
\underset{2}{N}\text{{}}_{(i)j}^{(a)}dx^{j}
\end{equation*}%
be the vertical distinguished 1-forms adapted to the nonlinear connection $N$%
.

An essential element in the development of our abstract geometrical
multi-time Hamilton gravitational theory is given by the following
definition:

\begin{definition}
From an abstract physical point of view, an adapted metrical d-tensor $%
\mathbb{G}$ on the dual $1$-jet space $E^{\ast }=J^{1\ast }(\mathcal{T},M),$
locally expressed by 
\begin{equation*}
\mathbb{G}=h_{ab}dt^{a}\otimes dt^{b}+g_{ij}dx^{i}\otimes
dx^{j}+h_{ab}g^{ij}\delta p_{i}^{a}\otimes \delta p_{j}^{b},
\end{equation*}%
where $g_{ij}=g_{ij}(t^{c},x^{k},x_{c}^{k})$ is a symmetric d-tensor field
of rank $n=\dim M$ having a constant signature on $E^{\ast }=J^{1\ast }(%
\mathcal{T},M),$ is called a \textbf{multi-time gravitational }$h$\textbf{%
-potential}\textit{\ on }$E^{\ast }$.
\end{definition}

Now, taking a multi-time Hamilton space $MH_{m}^{n}=(E^{\ast },H)$, via its
fundamental vertical metrical d-tensor $G_{(a)(b)}^{(i)(j)}$ (which is given
by (\ref{FVDT})) and its canonical nonlinear connection $N$, we naturally
construct a corresponding multi-time gravitational $h$-potential on $E^{\ast
}$, setting%
\begin{equation*}
\mathbb{G}=h_{ab}dt^{a}\otimes dt^{b}+g_{ij}dx^{i}\otimes
dx^{j}+h_{ab}g^{ij}\delta p_{i}^{a}\otimes \delta p_{j}^{b}.
\end{equation*}%
At the same time, let us consider that 
\begin{equation*}
C\Gamma (N)=\left( \chi
_{ab}^{c},A_{jc}^{k},H_{jk}^{i},C_{j(c)}^{i(k)}\right)
\end{equation*}%
is the Cartan canonical connection of the multi-time Hamilton space $%
MH_{m}^{n}$.\medskip

\textbf{Postulate.} \textit{We postulate that the \textbf{geometrical
Einstein-like equations}, which govern the multi-time gravitational }$h$%
\textit{-potential }$\mathbb{G}$\textit{\ of the multi-time Hamilton space }$%
MH_{m}^{n}$\textit{,\ are the abstract geometrical Einstein equations
attached to the Cartan canonical connection }$C\Gamma (N)$\textit{\ and to
the adapted metric }$\mathbb{G}$\textit{\ on }$E^{\ast }$\textit{,\ namely}%
\begin{equation}
\text{Ric}(C\Gamma )-{\frac{\text{Sc}(C\Gamma )}{2}}\mathbb{G}=\mathcal{K}%
\mathbb{T},  \label{Einstein-Global-Abstract}
\end{equation}%
\textit{where }Ric$(C\Gamma )$\textit{\ represents the \textbf{Ricci tensor}
of the Cartan connection, }Sc$(C\Gamma )$\textit{\ is the \textbf{scalar
curvature}, }$\mathcal{K}$\textit{\ is the \textbf{Einstein constant} and }$%
\mathbb{T}$\textit{\ is an intrinsic d-tensor of matter, which is called the 
\textbf{stress-energy d-tensor of polymomenta}.}\medskip

In the adapted basis of vector fields%
\begin{equation*}
(X_{A})={\left( {\frac{\delta }{\delta t^{a}}},{\frac{\delta }{\delta x^{i}}}%
,{\frac{\partial }{\partial p_{i}^{a}}}\right) ,}
\end{equation*}%
which is produced by the canonical nonlinear connection $N$ of the
multi-time Hamilton space $MH_{m}^{n}$, the curvature tensor $\mathbb{R}$ of
the Cartan canonical connection $C\Gamma (N)$ is locally expressed by 
\begin{equation*}
\mathbb{R}(X_{C},X_{B})X_{A}=\mathbf{R}_{ABC}^{D}X_{D}.
\end{equation*}%
It follows that we have 
\begin{equation*}
R_{AB}=\text{Ric}(X_{A},X_{B})=\mathbf{R}_{ABD}^{D},\qquad\text{Sc}(C\Gamma
)=G^{AB}R_{AB},
\end{equation*}%
where%
\begin{equation}
G^{AB}=\left\{ 
\begin{array}{ll}
\medskip h^{ab}, & \mbox{for}\;\;A=a,\;B=b \\ 
\medskip g^{ij}, & \mbox{for}\;\;A=i,\;B=j \\ 
\medskip h^{ab}g_{ij}, & \mbox{for}\;\;A={\QATOP{(a)}{(i)}},\;B={\QATOP{(b)}{%
(j)}} \\ 
0, & \mbox{otherwise}.%
\end{array}%
\right.  \label{fdt-Inv-ML}
\end{equation}

Taking into account, on the one hand, the form of the metrical d-tensor $%
\mathbb{G}=(G_{AB})$ of the multi-time Hamilton space $MH_{m}^{n}$, and, on
the other hand, taking into account the expressions of the local curvature
d-tensors attached to the Cartan canonical connection $C\Gamma \left(
N\right) $, by direct computations, we find

\begin{proposition}
The Ricci tensor \emph{Ric}$(C\Gamma )$ of the Cartan canonical connection $%
C\Gamma \left( N\right) $ of the multi-time Hamilton space $MH_{m}^{n}$ is
determined by the following adapted components:\medskip

\emph{(i)} for $m=\dim \mathcal{T}=1,$ we have%
\begin{equation*}
\begin{array}{ll}
R_{11}:=\chi _{11}=0, & R_{1i}=R_{1i1}^{1}=0,\medskip \\ 
R_{1(1)}^{\;\ (i)}=-P_{1(1)1}^{1(i)}=0, & R_{i1}=R_{i1r}^{r},\medskip\qquad
R_{ij}=R_{ijr}^{r}, \\ 
R_{(1)1}^{(i)}:=-P_{(1)1}^{(i)}=-P_{r1(1)}^{i\;\ (r)}, & R_{i(1)}^{%
\;(j)}:=-P_{i(1)}^{\;(j)}=-P_{ir(1)}^{r\;(j)},\medskip \\ 
R_{(1)(1)}^{(i)(j)}:=-S_{(1)(1)}^{(i)(j)}=-S_{r(1)(1)}^{i(j)(r)}, & 
R_{(1)j}^{(i)}:=-P_{(1)j}^{(i)}=-P_{rj(1)}^{i\;\ (r)};\medskip%
\end{array}%
\end{equation*}

\emph{(ii)} for $m=\dim \mathcal{T}\geq 2,$ we have%
\begin{equation*}
\begin{array}{ll}
R_{ab}:=\chi _{ab}=\chi _{abf}^{f}, & R_{ai}=R_{aif}^{f}=0,\medskip \\ 
R_{a(b)}^{\;\ (j)}=-P_{af(b)}^{f\;\ (j)}=0, & R_{ia}=R_{iar}^{r},\medskip%
\qquad R_{ij}=R_{ijr}^{r}, \\ 
R_{(a)(b)}^{(i)(j)}:=-S_{(a)(b)}^{(i)(j)}=-S_{r(a)(b)}^{i(j)(r)}=0, & 
R_{i(b)}^{\;(j)}:=-P_{i(b)}^{\;(j)}=-P_{ir(b)}^{r\;(j)}=0,\medskip \\ 
R_{(a)b}^{(i)}:=-P_{(a)b}^{(i)}=-P_{rb(a)}^{i\;\ (r)}=0, & 
R_{(a)j}^{(i)}:=-P_{(a)j}^{(i)}=-P_{rj(a)}^{i\;(r)}=0.\medskip%
\end{array}%
\end{equation*}
\end{proposition}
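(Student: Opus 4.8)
The plan is to compute each of the adapted components of the Ricci tensor $R_{AB} = \mathbf{R}_{ABD}^{D}$ by specializing the general curvature $d$-tensors of the Cartan canonical connection (listed in the preceding curvature theorem) and then contracting the appropriate pair of indices. The essential input is the explicit table of curvature $d$-tensors together with the reduction formulas valid for $m \geq 2$, namely $-R_{(l)(a)bc}^{(d)(i)} = \delta_l^i \chi_{abc}^d - \delta_a^d R_{lbc}^i$, and analogously for the mixed blocks. Since the metric $\mathbb{G}$ is block-diagonal with blocks indexed by the temporal ($\mathcal{T}$), horizontal-spatial ($h_M$) and vertical ($v$) distributions, each component $R_{AB}$ is obtained by tracing a specific curvature block over its last covariant-and-contravariant index.

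First I would fix the index conventions following $\mathbb{R}(X_C,X_B)X_A = \mathbf{R}_{ABC}^D X_D$ and set $R_{AB} = \mathbf{R}_{ABD}^D$, summing over the capital index $D$ running through all three distributions. For each choice of $(A,B)$ I would read off from the curvature theorem which local $d$-tensors contribute; because the metric has no cross-terms, only the diagonal traces survive. Concretely, for $m \geq 2$ I would obtain $R_{ab} = \chi_{ab} = \chi_{abf}^f$ from the $h_{\mathcal{T}}h_{\mathcal{T}}$ block, $R_{ia} = R_{iar}^r$ and $R_{ij} = R_{ijr}^r$ from the $h_M$ blocks, and then verify directly that every remaining mixed or vertical component vanishes. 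The vanishing of $R_{ai} = R_{aif}^f$ follows since the corresponding curvature $d$-tensor $R_{ibc}^l$ traced as $R_{aif}^f$ is antisymmetric in the traced pair; the vanishing of the $v$-type components $S_{(a)(b)}^{(i)(j)}$, $P_{i(b)}^{\,(j)}$, $P_{(a)b}^{(i)}$ and $P_{(a)j}^{(i)}$ is a consequence of the reduction (\ref{Cartan-local-coeff}), which gives $C_{j(c)}^{i(k)} = 0$ for $m \geq 2$, so that all curvatures built from the vertical coefficient $C$ — and hence all $S$- and $P$-type curvatures — are identically zero.

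The main obstacle I expect is purely bookkeeping: keeping the antisymmetry structure straight when tracing, so as to correctly distinguish the components that genuinely vanish (such as $R_{ai}$, which dies by antisymmetry of the curvature in its lower indices) from those that survive as nontrivial traces (such as $R_{ab}$ and $R_{ij}$). In the case $m = \dim\mathcal{T} = 1$ the computation is strictly analogous but richer, since $C_{j(1)}^{i(k)}$ need not vanish; there I would carry out the same index-by-index tracing using the full $m=1$ curvature formulas $R_{ijk}^l$, $P_{ij(1)}^{l\,(k)}$ and $S_{i(1)(1)}^{l(j)(k)}$, obtaining $R_{i1} = R_{i1r}^r$, $R_{ij} = R_{ijr}^r$, and the $P$- and $S$-type mixed components $-P_{r1(1)}^{i\,(r)}$, $-P_{ir(1)}^{r\,(j)}$, $-S_{r(1)(1)}^{i(j)(r)}$ and $-P_{rj(1)}^{i\,(r)}$, while $R_{11} = \chi_{11} = 0$ and the components $R_{1i}$, $R_{1(1)}^{\,(i)}$ vanish because the temporal curvature $\chi_{111}^1 = 0$ in dimension one. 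Collecting the surviving traces and discarding the vanishing ones then yields exactly the two tables in the statement, which completes the verification.
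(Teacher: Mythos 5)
Your proposal is correct and follows essentially the same route as the paper, which likewise obtains these components "by direct computations", i.e.\ by tracing the adapted curvature d-tensors from the curvature theorem against the block-diagonal metric $\mathbb{G}$ and using the reduction (\ref{Cartan-local-coeff}) (in particular $C_{j(c)}^{i(k)}=0$ for $m\geq 2$) to kill the $P$- and $S$-type curvatures. One minor imprecision: $R_{ai}=R_{aif}^{f}=0$ is not a consequence of antisymmetry in the traced pair (the trace is over a lower and an upper index), but simply of the fact that the corresponding curvature block (row $h_{M}h_{\mathcal{T}}$, column $h_{\mathcal{T}}$) is already zero in the table --- the conclusion is unaffected.
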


Using the notations $\chi =h^{ab}\chi _{ab},\;R=g^{ij}R_{ij}$ and $%
S=h^{11}g_{ij}S_{(1)(1)}^{(i)(j)}$, we obtain

\begin{corollary}
The scalar curvature \emph{Sc}$(C\Gamma )$ of the Cartan canonical
connection $C\Gamma \left( N\right) $ of the multi-time Hamilton space $%
MH_{m}^{n}$ is given by the formulas:%
\index{scalar curvature!of a multi-time Lagrange space@\textit{of a multi-time Lagrange space}|textbf}
\medskip

\emph{(i)} for $m=\dim \mathcal{T}=1,$ we have $\emph{Sc}(C\Gamma )=R-S;$

\emph{(ii)} for $m=\dim \mathcal{T}\geq 2,$ we have $\emph{Sc}(C\Gamma
)=\chi +R.$
\end{corollary}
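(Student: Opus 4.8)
The plan is to expand the defining contraction $\text{Sc}(C\Gamma)=G^{AB}R_{AB}$ that precedes the statement, and to exploit the block structure of the inverse adapted metric recorded in (\ref{fdt-Inv-ML}). Because $G^{AB}$ is nonzero only when $A$ and $B$ are indices of the same type --- temporal--temporal (coefficient $h^{ab}$), spatial--spatial (coefficient $g^{ij}$), or vertical--vertical (coefficient $h^{ab}g_{ij}$) --- every mixed Ricci component produced by the preceding Proposition (such as $R_{1i}$, $R_{i1}$, $R_{(1)1}^{(i)}$, $R_{i(1)}^{\,(j)}$ and their $m\geq 2$ counterparts) is annihilated upon contraction with $G^{AB}$. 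Thus the sum collapses to exactly three surviving terms, namely $\text{Sc}(C\Gamma)=h^{ab}R_{ab}+g^{ij}R_{ij}+h^{ab}g_{ij}R_{(a)(b)}^{(i)(j)}$, and I only need the three ``pure'' (diagonal-block) Ricci components from the Proposition.

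Next I would substitute these components case by case. For $m=1$, the temporal block gives $h^{11}R_{11}=h^{11}\chi_{11}=0$; the spatial block gives $g^{ij}R_{ij}=R$ directly by the definition $R=g^{ij}R_{ij}$; and the vertical block gives $h^{11}g_{ij}R_{(1)(1)}^{(i)(j)}=-h^{11}g_{ij}S_{(1)(1)}^{(i)(j)}=-S$, using $R_{(1)(1)}^{(i)(j)}=-S_{(1)(1)}^{(i)(j)}$ from the Proposition together with the definition $S=h^{11}g_{ij}S_{(1)(1)}^{(i)(j)}$. Summing yields $\text{Sc}(C\Gamma)=R-S$. For $m\geq 2$, the temporal block now contributes $h^{ab}R_{ab}=h^{ab}\chi_{ab}=\chi$ by the definition $\chi=h^{ab}\chi_{ab}$; the spatial block again gives $g^{ij}R_{ij}=R$; and the vertical block vanishes, since the Proposition records $R_{(a)(b)}^{(i)(j)}=-S_{(a)(b)}^{(i)(j)}=0$ in this range of $m$. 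Hence $\text{Sc}(C\Gamma)=\chi+R$, as claimed.

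Finally, I would remark that there is no genuine analytic difficulty here; the computation is entirely bookkeeping, and the only points requiring care are the correct matching of each diagonal block of $G^{AB}$ with the Ricci component carrying the same index type and placement, and the tracking of the sign coming from the convention $R_{(a)(b)}^{(i)(j)}=-S_{(a)(b)}^{(i)(j)}$ in the vertical block. The conceptual heart of the argument, and the step worth emphasizing, is the vanishing of all off-diagonal contributions forced by the block-diagonal form (\ref{fdt-Inv-ML}) of the inverse adapted metric, which is precisely what makes the two compact formulas $R-S$ and $\chi+R$ emerge.
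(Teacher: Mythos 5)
Your computation is exactly the argument the paper intends: the corollary is obtained by contracting the Ricci components of the preceding Proposition with the block-diagonal inverse metric (\ref{fdt-Inv-ML}), so only the three pure blocks $h^{ab}R_{ab}$, $g^{ij}R_{ij}$ and $h^{ab}g_{ij}R_{(a)(b)}^{(i)(j)}$ survive, and substituting $\chi_{11}=0$, $R=g^{ij}R_{ij}$, $S=h^{11}g_{ij}S_{(1)(1)}^{(i)(j)}$ (for $m=1$) and $\chi=h^{ab}\chi_{ab}$, $S_{(a)(b)}^{(i)(j)}=0$ (for $m\geq 2$) gives $R-S$ and $\chi+R$ respectively. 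This matches the paper's (implicit) proof, and your bookkeeping, including the sign from $R_{(a)(b)}^{(i)(j)}=-S_{(a)(b)}^{(i)(j)}$, is correct.
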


In this context, the main result of the Hamilton geometrical multi-time
gravitational theory is offered by

\begin{theorem}
The \textbf{geometrical Einstein-like equations}, which govern the
multi-time gravitational $h$-po\-ten\-ti\-al $\mathbb{G}$ of the multi-time
Hamilton space $MH_{m}^{n}$, have the following adapted local form:\medskip

\emph{(i)} for $m=\dim \mathcal{T}=1,$ we have%
\begin{equation}
\left\{ 
\begin{array}{l}
\medskip {-{%
\dfrac{R-S}{2}}h_{11}=\mathcal{K}}\mathbb{T}{_{11}} \\ 
\medskip {R_{ij}-{\dfrac{R-S}{2}}g_{ij}=\mathcal{K}}\mathbb{T}{_{ij}} \\ 
-{S_{(1)(1)}^{(i)(j)}-{\dfrac{R-S}{2}{h}_{11}}}g^{ij}={\mathcal{K}}\mathbb{T}%
{_{(1)(1)}^{(i)(j)}}\medskip \\ 
\medskip 0=\mathbb{T}_{1i},\qquad R_{i1}=\mathcal{K}\mathbb{T}_{i1} \\ 
\medskip 0=\mathbb{T}_{1(1)}^{\text{ \ }\!(i)},\qquad-P_{i(1)}^{\;(j)}=%
\mathcal{K}\mathbb{T}_{i(1)}^{\;(j)} \\ 
-P_{(1)1}^{(i)}=\mathcal{K}\mathbb{T}_{(1)1}^{(i)},\qquad-P_{(1)j}^{(i)}=%
\mathcal{K}\mathbb{T}_{(1)j}^{(i)};%
\end{array}%
\right.  \label{Einstein1-m=1}
\end{equation}

\emph{(ii)} for $m=\dim \mathcal{T}\geq 2,$ we have%
\begin{equation}
\left\{ 
\begin{array}{l}
\medskip \chi {_{ab}-{\dfrac{\chi +R}{2}}h_{ab}=\mathcal{K}}\mathbb{T}{_{ab}}
\\ 
\medskip {R_{ij}-{\dfrac{\chi +R}{2}}g_{ij}=\mathcal{K}}\mathbb{T}{_{ij}} \\ 
\medskip {-{\dfrac{\chi +R}{2}}h_{ab}g^{ij}=\mathcal{K}}\mathbb{T}{%
_{(a)(b)}^{(i)(j)}} \\ 
\medskip 0=\mathbb{T}_{ai},\qquad R_{ia}=\mathcal{K}\mathbb{T}_{ia} \\ 
\medskip 0=\mathbb{T}_{(a)b}^{(i)},\qquad0=\mathbb{T}_{a(b)}^{\text{ \ }%
\!(j)} \\ 
0=\mathbb{T}_{i(b)}^{\;(j)},\qquad0=\mathbb{T}_{(a)j}^{(i)},%
\end{array}%
\right.  \label{Einstein1-m>2}
\end{equation}%
where $\mathbb{T}_{AB},\;A,B\in \left\{ a,\text{ }i,\text{ }{\QATOP{(a)}{(i)}%
}\right\} $ represent the adapted components of the stress-energy d-tensor
of matter $\mathbb{T}$.
\end{theorem}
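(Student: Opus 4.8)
The plan is to unpack the abstract tensorial postulate (\ref{Einstein-Global-Abstract}) into its adapted scalar components and then substitute the quantities that the preceding Proposition and Corollary already supply. In the adapted basis $(X_A)=\left(\delta/\delta t^a,\ \delta/\delta x^i,\ \partial/\partial p_i^a\right)$, writing $\text{Ric}(C\Gamma)=(R_{AB})$ and $\mathbb{G}=(G_{AB})$, the identity (\ref{Einstein-Global-Abstract}) is equivalent to the scalar system
\begin{equation*}
R_{AB}-\frac{\text{Sc}(C\Gamma)}{2}\,G_{AB}=\mathcal{K}\,\mathbb{T}_{AB},
\end{equation*}
where the index pair $(A,B)$ ranges over all combinations of the three index types. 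So the entire argument reduces to evaluating the left-hand side block by block.

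The key structural observation is that, by the definition of $\mathbb{G}$, the covariant metric is block-diagonal: $G_{AB}$ is nonzero only on the purely temporal block ($G_{ab}=h_{ab}$), the purely spatial block ($G_{ij}=g_{ij}$), and the purely vertical block ($G_{(a)(b)}^{(i)(j)}=h_{ab}g^{ij}$), every mixed block vanishing exactly as in the pattern (\ref{fdt-Inv-ML}). Consequently the scalar-curvature correction term $\frac{1}{2}\text{Sc}(C\Gamma)G_{AB}$ survives only in those three diagonal blocks; on each mixed block the system degenerates to $R_{AB}=\mathcal{K}\mathbb{T}_{AB}$. I would then insert the Ricci components from the preceding Proposition and the scalar curvature from the Corollary ($\text{Sc}=R-S$ for $m=1$, $\text{Sc}=\chi+R$ for $m\geq 2$) into each block in turn. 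On the three diagonal blocks this produces the equations carrying the $\frac{R-S}{2}$ (resp.\ $\frac{\chi+R}{2}$) factor; on the mixed blocks it produces either a genuine relation (when the relevant $R_{AB}$ is nonzero, e.g.\ $R_{i1}=R_{i1r}^r$) or, whenever the Proposition records that the corresponding Ricci component vanishes, the pure constraint $0=\mathbb{T}_{AB}$.

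The two cases $m=1$ and $m\geq 2$ are handled identically in method but differ in which components survive: for $m\geq 2$ the Proposition shows that the spatial--vertical and vertical--vertical Ricci components are zero, so the corresponding lines collapse to $0=\mathbb{T}_{AB}$, yielding (\ref{Einstein1-m>2}); for $m=1$ fewer vanishings occur and the fuller list (\ref{Einstein1-m=1}) results. Assembling the blocks in the fixed order (temporal--temporal, spatial--spatial, vertical--vertical, then the mixed pairs) reproduces the stated systems verbatim.

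Since the genuine computational content---the explicit Ricci d-tensors and the scalar curvature---is already delivered by the cited Proposition and Corollary, there is no deep obstacle here: the proof is a disciplined bookkeeping exercise. The only points demanding care are verifying that every off-diagonal component of $\mathbb{G}$ indeed vanishes (so that the scalar term cannot leak into the mixed equations) and keeping the signs consistent with the conventions for $R_{AB}$ recorded in the Proposition (e.g.\ the minus signs in $R_{(1)1}^{(i)}=-P_{(1)1}^{(i)}$). The hardest part is therefore purely organizational: sorting the index combinations into the correct blocks without conflating the covariant vertical component $h_{ab}g^{ij}$ with its contravariant counterpart $h^{ab}g_{ij}$ from (\ref{fdt-Inv-ML}).
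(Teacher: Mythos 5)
Your proposal is correct and follows exactly the route the paper intends: the theorem is obtained by writing the postulated equation (\ref{Einstein-Global-Abstract}) componentwise as $R_{AB}-\tfrac{1}{2}\mathrm{Sc}(C\Gamma)\,G_{AB}=\mathcal{K}\mathbb{T}_{AB}$ in the adapted basis, using the block-diagonality of $\mathbb{G}$, and substituting the Ricci components and scalar curvature from the preceding Proposition and Corollary. The paper offers no further argument beyond this bookkeeping, so your treatment matches it in substance and detail.
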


\begin{remark}
In order to have the compatibility of the system of geometrical
Einstein-like equations, it is necessary the \textquotedblright a
priori\textquotedblright\ vanishing of certain adapted components of the
stress-energy d-tensor of matter $\mathbb{T}$.
\end{remark}

From a physical point of view, it is well known that in the classical
Riemannian theory of gravity, the stress-energy d-tensor of matter have to
verify some conservation laws. By a natural extension of the classical
Riemannian conservation laws, in our geometrical Hamiltonian context, we 
\underline{postulate} the following \textit{generalized conservation laws}
of the stress-energy d-tensor of polymomenta $\mathbb{T}$:%
\begin{equation*}
\mathbb{T}_{A|B}^{B}=0,\qquad \forall \;A\in \left\{ a,\text{ }i,\text{ }%
_{(i)}^{(a)}\right\} ,
\end{equation*}%
where $\mathbb{T}_{A}^{B}=G^{BD}\mathbb{T}_{DA}$. Consequently, by
straightforward computations, we obtain

\begin{theorem}
The \textbf{generalized conservation laws} of the Einstein-like equations of
the multi-time Hamilton space $MH_{m}^{n}$ are expressed by the following
formulas:\medskip

\emph{(i)} for $m=\dim \mathcal{T}=1,$ we have%
\begin{equation}
\left\{ 
\begin{array}{l}
\medskip {\left[ {\dfrac{R-S}{2}}\right]
_{/1}=R_{1|r}^{r}-P_{(r)1}^{(1)}|_{(1)}^{(r)}} \\ 
\medskip {\left[ R_{j}^{r}-{\dfrac{R-S}{2}}\delta _{j}^{r}\right]
_{|r}=P_{(r)j}^{(1)}|_{(1)}^{(r)}} \\ 
{\left[ S_{(r)(1)}^{(1)(j)}+{\dfrac{R-S}{2}}\delta _{r}^{j}\right]
\left\vert _{(1)}^{(r)}\right. =-P_{\;\;\;(1)|r}^{r(j)}},%
\end{array}%
\right.  \label{Cons-laws-m=1}
\end{equation}%
\medskip where%
\begin{equation*}
\begin{array}{lll}
R_{1}^{i}=g^{iq}R_{q1},\medskip & P_{(i)1}^{(1)}=h^{11}g_{iq}P_{(1)1}^{(q)},
& R_{j}^{i}=g^{iq}R_{qj}, \\ 
P_{(i)j}^{(1)}=h^{11}g_{iq}P_{(1)j}^{(q)}, & P_{\;%
\;(1)}^{i(j)}=g^{iq}P_{q(1)}^{\;\;(j)}\medskip & 
S_{(i)(1)}^{(1)(j)}=h^{11}g_{iq}S_{(1)(1)}^{(q)(j)};%
\end{array}%
\end{equation*}

\emph{(ii)} for $m=\dim \mathcal{T}\geq 2,$ we have%
\begin{equation}
\left\{ 
\begin{array}{l}
\medskip {\left[ \chi _{b}^{f}-{\dfrac{\chi +R}{2}}\delta _{b}^{f}\right]
_{/f}=-R_{b|r}^{r}} \\ 
{\left[ R_{j}^{r}-{\dfrac{\chi +R}{2}}\delta _{j}^{r}\right] _{|r}=0},%
\end{array}%
\right.  \label{Cons-laws-m>2}
\end{equation}%
where 
\begin{equation*}
\begin{array}{ccc}
\chi _{b}^{f}=h^{fc}\chi _{cb}, & R_{j}^{i}=g^{iq}R_{qj}, & 
R_{b}^{i}=g^{iq}R_{qb}.%
\end{array}%
\end{equation*}
\end{theorem}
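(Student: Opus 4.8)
The plan is to unpack the \emph{postulated} divergence-free condition $\mathbb{T}_{A|B}^{B}=0$ into adapted components, using the Einstein-like equations to trade the stress-energy d-tensor for curvature quantities. First I would write the Einstein-like equations in contracted form $\mathcal{K}\mathbb{T}_{DA}=R_{DA}-\frac{\text{Sc}(C\Gamma)}{2}G_{DA}$, and raise the first index with the block-diagonal inverse metric $G^{AB}$ from (\ref{fdt-Inv-ML}) to obtain $\mathcal{K}\mathbb{T}_{A}^{B}=R_{A}^{B}-\frac{\text{Sc}(C\Gamma)}{2}\delta_{A}^{B}$, where $R_{A}^{B}=G^{BD}R_{DA}$. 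Since the Kronecker $\delta_{A}^{B}$ is parallel with respect to any $N$-linear connection, the contracted covariant derivative of the scalar term collapses to $\left[\frac{\text{Sc}(C\Gamma)}{2}\right]_{|A}$, namely the single covariant derivative taken in the direction matching the type of the free index $A$.

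Next I would split the computation according to the type of the free index $A$ (temporal, spatial, or vertical) and the summation index $B$, exploiting the block structure of $G^{AB}$ together with the metricity relations $g_{ij|k}=0$, $g^{ij}|_{(c)}^{(k)}=0$, $h_{ab/c}=0$ and their companions, so that all metric factors and Kronecker deltas pass freely through the covariant derivatives. The crucial simplification comes from the preceding Proposition on the Ricci tensor: most mixed and vertical Ricci components vanish, so for each fixed $A$ only a few blocks of $R_{A}^{B}$ survive. For $m=\dim\mathcal{T}\geq 2$, where additionally $\chi$ and $R$ are functions of $(t,x)$ alone, the vertical covariant derivative of the scalar curvature vanishes and the vertical divergence becomes trivial; this is precisely why only the two conservation equations (\ref{Cons-laws-m>2}) remain, one arising from $A$ temporal and one from $A$ spatial.

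For $m=\dim\mathcal{T}=1$ the same procedure produces the three nontrivial equations (\ref{Cons-laws-m=1}), since now the components $R_{(1)1}^{(i)}$, $R_{i(1)}^{(j)}$, $R_{(1)(1)}^{(i)(j)}$, $R_{(1)j}^{(i)}$ built from the $P$- and $S$-curvatures do not vanish, and the scalar $R-S$ genuinely depends on the polymomenta, so its vertical derivative survives. In each of the three cases I would introduce the auxiliary raised quantities $\chi_{b}^{f}$, $R_{j}^{i}$, $R_{b}^{i}$, $P_{(i)j}^{(1)}$, $S_{(i)(1)}^{(1)(j)}$ etc.\ defined in the statement, and then collect the horizontal-temporal ``$_{/}$'', horizontal-spatial ``$_{|}$'' and vertical ``$|_{(c)}^{(k)}$'' contributions of $\mathbb{T}_{A|B}^{B}=0$ into the displayed identities.

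The only genuine bookkeeping hazard, and what I expect to be the main obstacle, is the non-symmetry of the Cartan Ricci tensor (for instance $R_{ai}=0$ while $R_{ia}\neq 0$, and likewise $R_{1i}=0$ while $R_{i1}\neq 0$): one must be scrupulous about which index of $\mathbb{T}_{DA}$ is raised and contracted, and about the off-diagonal spatial-temporal terms $R_{b}^{j}$ and $R_{b|r}^{r}$ that feed the right-hand sides. Once the index conventions are pinned down, each conservation law follows by direct substitution of the Einstein-like equations into $\mathbb{T}_{A|B}^{B}=0$ followed by an invocation of metricity; notably, no Bianchi identity is required here, because the conservation laws are postulated rather than derived from the curvature.
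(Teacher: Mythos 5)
Your proposal is correct and is essentially the paper's own route: the paper gives no explicit proof (it states only that the laws follow ``by straightforward computations'' from the postulated condition $\mathbb{T}_{A|B}^{B}=0$ with $\mathbb{T}_{A}^{B}=G^{BD}\mathbb{T}_{DA}$), and your plan of substituting the Einstein-like equations, exploiting the block structure of $G^{AB}$, metricity, the vanishing Ricci components, and the $(t,x)$-dependence of $\chi+R$ for $m\geq 2$ is exactly that computation. Your remarks on the non-symmetry of the Ricci tensor and on the absence of any Bianchi-identity argument are also accurate.
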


\textbf{Open problem.} The authors of this paper consider that the finding
of a possible real physical meaning of the present multi-time Hamiltonian
geometrical theories of gravity and electromagnetism may be an open problem
for physicists.

\textbf{Acknowledgements.} Many thanks go to Professor Gh. Atanasiu for his
constant encouragements and useful suggestions concerning this research
topic.

Alexandru OAN\u{A} and Mircea NEAGU

University Transilvania of Bra\c{s}ov,

Department of Mathematics - Informatics,

Blvd. Iuliu Maniu, no. 50, Bra\c{s}ov 500091, Romania.

\textit{E-mails:} alexandru.oana@unitbv.ro, mircea.neagu@unitbv.ro

\end{document}